\newcommand{\T}{{\mathcal T}}
\newcommand{\ra}[1]{{\color{black}#1}}
\newcommand{\sss}[1]{{\color{black}#1}}
\newcommand{\sssp}[1]{{\color{black}#1}}
\begin{document}

%%
%% The "title" command has an optional parameter,
%% allowing the author to define a "short title" to be used in page headers.
\title{Near-Optimal Stability for Distributed Transaction Processing in Blockchain Sharding}

%%
%% The "author" command and its associated commands are used to define
%% the authors and their affiliations.
%% Of note is the shared affiliation of the first two authors, and the
%% "authornote" and "authornotemark" commands
%% used to denote shared contribution to the research.

% uncomment from here for authors name
% \author{Ben Trovato}
% \authornote{Both authors contributed equally to this research.}
% \email{trovato@corporation.com}
% \orcid{1234-5678-9012}
% \author{G.K.M. Tobin}
% \authornotemark[1]
% \email{webmaster@marysville-ohio.com}
% \affiliation{%
%   \institution{Institute for Clarity in Documentation}
%   \streetaddress{P.O. Box 1212}
%   \city{Dublin}
%   \state{Ohio}
%   \country{USA}
%   \postcode{43017-6221}
% }

\author{Ramesh Adhikari}
\orcid{0000-0002-8200-9046}
\affiliation{%
  \institution{School of Computer \& Cyber Sciences\\Augusta University}
  % \streetaddress{P.O. Box 1212}
  \city{Augusta}
  \state{Georgia}
  \country{USA}
  \postcode{30912}
  }
\email{radhikari@augusta.edu}

\author{Costas Busch}
\orcid{0000-0002-4381-4333}
\affiliation{%
  \institution{School of Computer \& Cyber Sciences\\Augusta University}
  % \streetaddress{P.O. Box 1212}
  \city{Augusta}
  \state{Georgia}
  \country{USA}
  \postcode{30912}
}
\email{kbusch@augusta.edu}

\author{Dariusz R. Kowalski}
\orcid{0000-0002-1316-7788}
\affiliation{%
  \institution{School of Computer \& Cyber Sciences \\ Augusta University}
  % \streetaddress{P.O. Box 1212}
  \city{Augusta}
  \state{Georgia}
  \country{USA}
  \postcode{30912}
}
\email{dkowalski@augusta.edu}

% \author{Valerie B\'eranger}
% \affiliation{%
%   \institution{Inria Paris-Rocquencourt}
%   \city{Rocquencourt}
%   \country{France}
% }

% \author{Aparna Patel}
% \affiliation{%
%  \institution{Rajiv Gandhi University}
%  \streetaddress{Rono-Hills}
%  \city{Doimukh}
%  \state{Arunachal Pradesh}
%  \country{India}}

% \author{Huifen Chan}
% \affiliation{%
%   \institution{Tsinghua University}
%   \streetaddress{30 Shuangqing Rd}
%   \city{Haidian Qu}
%   \state{Beijing Shi}
%   \country{China}}

% \author{Charles Palmer}
% \affiliation{%
%   \institution{Palmer Research Laboratories}
%   \streetaddress{8600 Datapoint Drive}
%   \city{San Antonio}
%   \state{Texas}
%   \country{USA}
%   \postcode{78229}}
% \email{cpalmer@prl.com}

% \author{John Smith}
% \affiliation{%
%   \institution{The Th{\o}rv{\"a}ld Group}
%   \streetaddress{1 Th{\o}rv{\"a}ld Circle}
%   \city{Hekla}
%   \country{Iceland}}
% \email{jsmith@affiliation.org}

% \author{Julius P. Kumquat}
% \affiliation{%
%   \institution{The Kumquat Consortium}
%   \city{New York}
%   \country{USA}}
% \email{jpkumquat@consortium.net}

% %%
% %% By default, the full list of authors will be used in the page
% %% headers. Often, this list is too long, and will overlap
% %% other information printed in the page headers. This command allows
% %% the author to define a more concise list
% %% of authors' names for this purpose.

% \renewcommand{\shortauthors}{Adhika et al.}

%%
%% The abstract is a short summary of the work to be presented in the
%% article.
\begin{abstract}
In blockchain sharding, 
$n$ processing nodes are divided into $s$ shards, and each shard processes transactions in parallel. A key challenge in such a system is to ensure system stability 
for any ``tractable'' pattern of generated transactions; 
this is modeled by an adversary generating transactions with a certain rate of at most $\rho$ and burstiness $b$.
This model captures worst-case scenarios and even some attacks on transactions' processing, e.g., DoS.
A stable system ensures bounded transaction queue sizes and bounded transaction latency.
It is known that the absolute upper bound 
on the maximum injection rate 
for which any scheduler could guarantee bounded queues and latency of transactions is $\max\left\{ \frac{2}{k+1},  \frac{2}{ \left\lfloor\sqrt{2s}\right\rfloor}\right\}$, 
where $k$ is the maximum number of shards that each transaction accesses. Here, we first provide a single leader scheduler that guarantees stability under injection rate $\rho \leq \max\left\{ \frac{1}{16k}, \frac{1}{16\lceil \sqrt{s} \rceil}\right\}$. 
Moreover, we also give a distributed scheduler with multiple leaders that guarantees stability under injection rate $\rho \leq \frac{1}{16c_1 \log D \log s}\max\left\{ \frac{1}{k}, \frac{1}{\lceil \sqrt{s} \rceil} \right\}$,
where $c_1$ is some positive constant and $D$ is the diameter of shard graph $G_s$.
This bound is within a poly-log factor from the optimal injection rate,
and significantly improves the best previous known result
for the distributed setting \cite{adhikari2024spaastable}.
\end{abstract}

\ccsdesc[500]{Computing methodologies~Distributed algorithms}
\ccsdesc[500]{Theory of computation~Scheduling algorithms}

\keywords{Blockchain Sharding, Transaction Scheduling, Partial Synchrony, Adversarial Stability, Locality-based Clustering.}

\maketitle

% \vspace{-4mm}
\section{Introduction}
\label{sec:introduction}
% \vspace{-3mm}
Blockchain provides several unique features to transaction processing and storing, such as transparency, immutability, and fault tolerance~\cite{survey-of-onsensus}, and has been used in many applications~\cite{mcghin2019blockchain,azzi2019power}. However, despite these advantages, it suffers from scalability issues, low throughput, and high latency.
% A blockchain is a chain (linked list) of transaction blocks. Due to its several special features, such as fault tolerance, transparency, non-repudiation, and immutability \cite{survey-of-onsensus}, it has been used in various domains, such as cryptocurrency \cite{bitcoin, ethereum}, healthcare \cite{mcghin2019blockchain}, digital forensics \cite{akbarfam2023forensiblock, akbarfam2023deep}, and supply chain management \cite{azzi2019power}. In a blockchain network, all of the participating nodes need to reach a consensus to append a new block, which is a time and energy-consuming process. Moreover, each node is required to process and store all transactions, which leads to scalability issues in the blockchain system. 
One of the methods to improve scalability is the sharding technique~\cite{Elastico,Rapidchain,Byshard}, 
%have been used to solve the scalability issue of the blockchain, 
which divides the entire blockchain network into smaller groups of nodes, called {\em shards}, and allows them %shards 
to process transactions in parallel.

This work aims to improve
%(near-optimal) 
the stability of transaction processing by blockchains with shards.
Stability of a system is defined by maintaining bounded buffers (queues) and bounded latency of transactions' processing in an arbitrarily long run.
We focus on worst-case scenarios and thus consider an adversarial model of transactions' generation: new transactions could be generated (also called injected or arriving) at any time as long as certain injection rate $\rho$ and burstiness $b$ are preserved 
%(see Section~\ref{preliminaries} for details)
\cite{borodin2001adversarial,adhikari2024spaastable}.
This model also captures Denial of Service (DoS) attacks~\cite{raikwar2021attacks,nguyen2022denial}, in blockchain sharding systems.
The goal is to guarantee stability (i.e., bounded queues and latency) for as large injection rate $\rho$ as possible and for any fixed~burstiness~$b$.
%\cite{adhikari2024spaastable}. 
%More specifically, we want to make the blockchain sharding system stable -- that is, to assure bounded buffers (queues) and latency of transaction processing -- under a high injection rate of transactions to the system. We study the adversarial model of transaction generation.
%continuously generates transactions. 
%Similarly to~\cite{adhikari2024spaastable}, the arrival of the transactions into the shared blockchain system depends on the transaction injection rate (we also called transaction generation rate) $\rho$ and burstiness $b$. The transaction injection rate $\rho$ is defined as the number of transactions injected amortized per unit time, and $b$ is the burst of the transactions injected into the system, which is not predicted by the injection rate $\rho$. The goal is to improve the stability of the blockchain sharding system so that the number of pending transactions in queues is bounded (i.e., bounded queue size) and the waiting time for each transaction to commit is minimal (i.e., minimum latency). We use a similar adversarial model used in SPAA 2024~\cite{adhikari2024spaastable} to inject the transactions in the shared blockchain system, which is motivated by adversarial queuing theory \cite{borodin2001adversarial}. The adversary model we consider here models security issues, such as Denial of Service (DoS) attacks~\cite{raikwar2021attacks,nguyen2022denial} in blockchain sharding systems, where the adversary attempts to make the system unstable by injecting too many transactions.

Similarly to 
%the system considered in ~
\cite{adhikari2024spaastable,adhikari2024fast,adhikari2023lockless}, we consider the blockchain sharding system with $n$ nodes that are divided into $s$ shards. Each shard has its own local blockchain and holds a subset of the accounts and states related to those accounts. We consider the partial-synchronous model~\sss{\cite{dwork1988consensus}}, which is more practical than the synchronous communication model considered in~\cite{adhikari2024spaastable,adhikari2024fast}. We assume that a transaction $T_i$ might be generated in any shard $S_i\in s$, which we call the {\em home shard} of transaction~$T_i$. Moreover, considering the sharding model~\cite{adhikari2023lockless,adhikari2024spaastable,adhikari2024fast}, each transaction $T_i$ could access accounts from at most $k$ shards, which we call the {\em destination shards}~for~$T_i$. Transaction $T_i$ can be split into subtransactions $T_{i,j}$,
where each subtransaction is sent to the respective {\em destination shard} that holds the corresponding account for the commitment. Similarly to 
%the work in~
\cite{adhikari2024spaastable,adhikari2024fast}, the maximum distance between the home shard 
%of a transaction and the respective 
and destination shards of a transaction 
is at most $d\leq D$ in the shard graph~$G_s$, where $D$ is the diameter of $G_s$. 
%This means the distance between the transaction and its destination shards 
%(objects) 
%is upper~bounded~by~$d$.  

We assume that shards 
have queues that store pending transactions.
% , and we examine and analyze the combined pending queue size of the transactions throughout the system.
A shard 
%(home shard) 
picks transactions from its queue, splits them into subtransactions, and sends them to destination shards for processing in parallel. A conflict occurs when shards pick transactions that access the same account in destination shards.
Such conflicts prevent the transactions from being committed concurrently, and forces the conflicting transactions to be committed in a serialized order. Our proposed scheduler coordinates with the shards to determine a schedule with a sequence of conflict-free sets of transactions that commit concurrently, which ultimately imposes an efficient serialization of conflicting transactions. 
%one after another. 

%The blockchain sharding system is considered as stable if the queue sizes are bounded, while it is unstable if the queue sizes grow unbounded and transactions are dropped because they do not fit in the queue.
%Instability may be exploited by an adversary that injects transactions with the purpose of destabilizing the system.
%We are interested in finding the maximum continuous transaction injection rates under which the queue sizes and latency are bounded, and hence, the system is stable.

\renewcommand{\arraystretch}{1.4}
 \begin{table*}[t!]
\centering
% \small
\smaller[1]
\begin{tabular}{|l|l|l|}
\cline{1-3} 
    &{\qquad {\bf Previous Work~\cite{adhikari2024spaastable}}}  & \qquad \qquad{\bf Proposed Results} \\
\cline{2-3}
  & \qquad \qquad Synchronous & \qquad \qquad Partially-synchronous \\ 
  \cline{2-3}
 \multirow{2}{*}{{  }}  & 
    \multicolumn{2}{l|}{ 
    {\qquad \qquad \qquad \qquad \qquad   {\bf Single Leader Scheduler}}}
\\
 
  \cline{1-3}

 % Communication Model & Synchronous & Partially-synchronous\\ 
 %  \cline{1-3}
   Txn injection rate $\rho$ &$\rho \leq \max\left\{ \frac{1}{18k}, \frac{1}{\lceil 18 \sqrt{s} \rceil} \right\}$ & $\rho \leq \max\left\{ \frac{1}{16k}, \frac{1}{16\lceil \sqrt{s} \rceil} \right\}$ \ (Th~\ref{theorem:single-leader}) \\ 
    \cline{1-3}
 Pending queues size 
   %(upper bound)  
   & $\le 4bs$ &  $\le 2(2b+\rho\cdot 48\mathfrak{D})s$ \ (Th~\ref{theorem:single-leader}) \\ 
   \cline{1-3}
    Latency 
    %(upper bound) 
    & $\le 36b \cdot \min \{k, \lceil \sqrt{s} \rceil \}$ &  $\le 32b\cdot \min \{k, \lceil \sqrt{s}\rceil\}+96\mathfrak{D}$ \ (Th~\ref{theorem:single-leader})  \\ 
   \cline{1-3}
  Shards network & Clique graph with unit distance & General graph with diameter $D$\\ 
\cline{1-3}

 \multirow{2}{*}{{  }}   & \multicolumn{2}{l|}{ {\qquad \qquad \qquad \qquad {\bf Multiple Leaders Scheduler}}}
 % \\ \cline{2-3} 
 %    &{ SPAA 2024~\cite{adhikari2024spaastable}}  & Proposed 
    \\
  \cline{1-3}

 % Communication Model & Synchronous & Partially-synchronous \\ 
 %  \cline{1-3}
   Txn injection rate $\rho$ & $\rho \leq \frac{1}{c'd \log^2 s} \cdot \max\left\{ \frac{1}{k}, \frac{1}{\sqrt{s}} \right\}$&  $\rho \leq \frac{1}{16 c_1 \log D \log s}\max\left\{ \frac{1}{k}, \frac{1}{\lceil \sqrt{s} \rceil} \right\}$ \ (Th~\ref{theorem:multi-leader-scheduler})\\ 
    \cline{1-3}
    Pending queues size 
   %(upper bound)  
   & $\le 4bs$& $\le 2(2b+ \rho \cdot 48 c_1 \mathfrak{D} \log D \log s)s$ \ (Th~\ref{theorem:multi-leader-scheduler})  \\ 
   \cline{1-3}
    Latency 
    %(upper bound) 
    & $\le 2 \cdot c'bd \log^2 s \cdot \min \{k, \lceil \sqrt{s}\rceil\}$ &  $32c_1b \log D \log s \cdot \min\{k, \sqrt{s}\} $ \ \ 
    % (Theorem~\ref{theorem:multi-leader-scheduler})
    \\
    &&$+ 96 c_1 \mathfrak{D} \log D \log s$ \ (Th~\ref{theorem:multi-leader-scheduler})
    \\ 
   \cline{1-3}
  Shards network &\multicolumn{2}{l|}{ {\qquad \qquad \qquad \qquad General graph with diameter $D$}} 
  \\ 
\cline{1-3}

\end{tabular}
\caption{
Comparison of our schedulers with the best-known results in~\cite{adhikari2024spaastable}, where $d \leq D$ denotes
the maximum distance of any transaction from its home shard to the shards it will access, $\mathfrak{D}$ denotes an upper bound on the local processing time and the communication delay between any two shards in our proposed protocol \sssp{(is at least $d$)}, and  $c', c_1$ are some constants implied by analysis.}
%Summary of the  proposed scheduler contributions compare with closely related work SPAA 2024~\cite{adhikari2024spaastable}.
% \vspace{-10mm}
\label{tbl:contribution-summary}
\end{table*}

% \vspace*{-2ex}
\paragraph*{\bf Contributions.}
%To our knowledge, we provide the first near-optimal stable transaction injection rate for the blockchain sharding system under adversarial transaction generation. 
We improve the best-known transaction injection rates~\cite{adhikari2024spaastable}, which guarantee stability, and we achieve this by employing a specific event-driven approach in designing the schedulers. 
% by providing a single and multi-leader distributed schedulers.
% Moreover, our schedulers are asynchronous, and the stability guarantees are in partially-synchronous executions.
%by considering a partially synchronous model instead of a synchronous one. 
Another advantage of our approach is that it does not require full synchrony.
Moreover, the previous work~\cite{adhikari2024spaastable} assumes a clique graph with unit distances for a single leader scheduler, whereas we consider a general graph with a diameter $D$. 
For such graphs, the scheduler in~\cite{adhikari2024spaastable} would stabilize for $D$ times smaller injection rate.
%If we consider this setting in~\cite{adhikari2024spaastable}, their stable injection rate would be reduced by a factor of $D$. 
Similarly, in a multi-leader setting, 
our scheduler allows stability for $d$ times higher injection rate of transactions than in~\cite{adhikari2024spaastable}.
%previous work~\cite{adhikari2024spaastable} includes a $d$ factor in the denominator of the injection rate, which we eliminate. This significantly improves the stable injection rate. 
The summary of our contributions and their comparison to~\cite{adhikari2024spaastable} is shown in Table~\ref{tbl:contribution-summary}, 
which we also describe briefly as follows:
% \vspace{-1.5mm}
\begin{itemize}
    \item First, we provide a single leader scheduler in partially-synchronous setting, which guarantees stability under transaction injection rate $\rho \leq \max\{ \frac{1}{16k}, \frac{1}{16\lceil \sqrt{s} \rceil} \}$.
    %where $k$ is the upper bound on the number of shard accesses by each transaction and $s$ is the total number of shards. 
    The queue size of this scheduler's pending transactions 
    %queue size 
    is bounded by $2(2b+\rho\cdot 48\mathfrak{D})s$ and the latency is bounded by $32b\cdot \min \{k, \lceil \sqrt{s}\rceil\}+96\mathfrak{D}$, 
    %with diameter $D$, where 
    %$b$ is burstiness and 
    where $\mathfrak{D}$ denotes an upper bound on the combined 
    delay of local processing (including consensus in a shard) and communication between~any~shards.
    %local processing time and communication delay.
    % between any two shards. %(see Section~\ref{preliminaries}).
    \item Next, we provide a multi-leader distributed scheduler, where multiple non-overlapping shard leaders process the transactions in parallel. This scheduler guarantees  stability for transaction injection rate $\rho \leq \frac{1}{16c_1 \log D \log s}\max\left\{ \frac{1}{k}, \frac{1}{\lceil \sqrt{s} \rceil} \right\}$, for which queue sizes are bounded by $2(2b+\rho\cdot 48c_1\mathfrak{D} \log D \log s)s$ and transaction latency is bounded by $32c_1b \log D \log s \cdot \min\{k, \sqrt{s}\} + 96 c_1 \mathfrak{D} \log D \log s$,  where $c_1$ is a positive constant coming from the analysis.
\end{itemize}

Note that in \cite{adhikari2024spaastable}, for multiple leaders, the injection rate $\rho$  is suboptimal because it inversely depends (linearly) on the parameter $d$. In our result, Theorem \ref{theorem:multi-leader-scheduler}, the injection rate $\rho$ does not depend linearly on $d$, but only on $\log D$; hence, our bound on injection rate $\rho$ is near-optimal (within poly-log factors).

\paragraph*{\bf Paper Organization.} 
%The rest of this paper is structured as follows: 
Section \ref{sec:related-work} provides the related work, and Section \ref{preliminaries} describes the preliminaries for this study and the sharding model.
Section~\ref{sec:single-leader} presents a single leader scheduler.
% Section~\ref{sec:basic-distributed-scheduler} presents a basic stable solution and a complementary lower bound on stable injection rate.
Section~\ref{sec:multiple-leaders} generalizes the techniques to a multi-leader scheduler in the distributed setting. We provide the conclusion in Section~\ref{sec:conclusion}.

% \newpage

% \vspace*{-2mm} 
% \vspace{-4mm}
\section{Related Work}
\label{sec:related-work}
% \vspace{-3mm}
Several blockchain sharding protocols~\cite{Elastico,Rapidchain,Byshard,Zilliqa} have been proposed to solve the scalability issue of the blockchain. Although these protocols provide better throughput and process the transactions with minimum latency, none of these protocols provide stability analysis under adversarial transaction generation. Similarly, there is recent work on transaction scheduling in blockchain sharding~\cite{adhikari2024fast}, but this paper does not provide stability analysis; they only focus on the performance-oriented scheduling problem. 

The only other known previous work that provides a stability analysis for blockchain sharding appears in~\cite{adhikari2024spaastable}. However, the communication model considered in their protocol is synchronous, which may not accurately represent real distributed blockchain scenarios. Moreover, their transaction injection rate is not close to an optimal one for the distributed setting (i.e. it deviates from the optimal injection rate by a factor of $1/d$). In this work, we consider a partially-synchronous model, and we also provide near-optimal transaction injection rate bounds (that do not depend on $d$) for which the system remains stable. Hence, we significantly improve on the previous work.
% One of our proposed scheduling protocols is a single leader scheduler inspired by Ethereum’s Danksharding~\cite{danksharding}, where a single block proposer gathers all transaction data and constructs blocks. This simplifies the sharding model by reducing coordination overhead across shards. 
Moreover, in~\cite{adhikari2024spaastable}, the single leader model was analyzed under a synchronous setting with a clique graph of unit distance. However, such assumptions may not be practical in real-world blockchain networks. Therefore, we extend the model here not only by allowing partially synchronous communication but also by considering a general graph structure with a diameter $D$ and maximum communication delay $\mathfrak{D}$, where shards can be connected to each other in a more flexible topology.

The adversarial model we consider here was first introduced in the context of adversarial queuing theory by Borodin 
{\em et al.}~\cite{borodin2001adversarial} for stability analysis in routing algorithms, where packets are continuously generated over time. Adversarial queuing theory~\cite{borodin2001adversarial} provides a framework for establishing worst-case injection rate bounds, and it was recently used to analyze blockchain sharding environments with unpredictable transaction injections~\cite{adhikari2024spaastable}. Adversarial queuing theory has also been applied to various dynamic tasks in communication networks and channels~\cite{bender2005adversarial,chlebus2009maximum,chlebus2012adversarial}. 
% and for several dynamic task generations in distributed transactional memory~\cite{busch2023stable}.
Moreover, \sss{in~\cite{busch2023stable}, the authors introduced a stable scheduling algorithm for software transactional memory systems under adversarial transaction generation. Their model assumes a synchronous communication model, it allows objects to move across nodes, and transactions are executed once the required objects become available. However, in the blockchain sharding model, objects are static in shards, and transaction commitment requires confirmation from all the respective involved shards. Therefore, the results in~\cite{busch2023stable} do not directly apply to our transaction scheduling model.}

% \vspace{-4mm}                          
\section{Technical Preliminaries} 
\label{preliminaries}
% \vspace{-3mm}                        
%\vspace*{-0.5ex}                      \paragraph{\bf Blockchain Sharding}
Similar to the previous works~\cite{adhikari2024spaastable,adhikari2023lockless},
we consider a blockchain system with 
$n$ participating nodes, which are divided into 
$s$ shards $S_1, S_2, \cdots S_s$,
where each shard $S_i$ is a subset of nodes, i.e., 
$S_i \subseteq \{1, \ldots, n\}$.
The shards are pairwise disjoint, forming a partition of the $n$ nodes,
that is, for any $i \neq j$, $S_i \cap S_j = \emptyset$, and $n = \sum_i |S_i|$.
%Shards process transactions in parallel, we assume that the system consists of multiple shards.
Similarly to~\cite{adhikari2024spaastable}, we assume that shards communicate via message passing, where all non-faulty nodes within a shard must reach consensus on each message before sending it to another shard. To achieve this, each shard executes a consensus algorithm before transmitting messages (e.g., using PBFT \cite{PBFT} in the shard). 
We denote $n_i = |S_i|$ is the total number of nodes in shard $S_i$ (the size of $S_i$), and by $f_i$ is
the number of faulty (Byzantine) nodes in $S_i$. 
For Byzantine fault tolerance and security guarantees, we assume that each shard $S_i$ satisfies the condition $n_i > 3 f_i$.
%in each shard $S_i$.
Moreover, similarly to ~\cite{adhikari2024spaastable,Byshard,adhikari2023lockless,adhikari2024fast}, we assume that the system consists of a set of shared accounts $\mathcal{O}$ (we also refer to them as objects). 
The set 
$\mathcal{O}$ is partitioned into disjoint subsets 
$\mathcal{O}_1, \ldots, \mathcal{O}_s$ where $\mathcal{O}_i$
represents the collection of objects handled by shard $S_i$.
Each shard $S_i$ maintains its own local blockchain (ledger) based on the subtransactions it processes for $\mathcal{O}_i$.

% Consider a set of shared accounts $\mathcal{O}$
% (we also refer to the accounts as {\em objects}). 
% As in previous studies \cite{Byshard,adhikari2023lockless}, 
% we assume that each shard is responsible for a specific subset of the shared objects (accounts).
% Namely, $\mathcal{O}$ is divided into disjoint subsets $\mathcal{O}_1, \ldots, \mathcal{O}_s$, where $\mathcal{O}_i$ represents the set of objects managed by shard $S_i$.
% Each shard $S_i$ maintains a local blockchain of subtransactions that access objects in the respective $\mathcal{O}_i$. 

%\paragraph{\bf Shard Network Model}
%\todo{do we need it?}

% \vspace*{-2ex}
\paragraph{\bf Transactions and Subtransactions.}
For a transaction $T_i$ the respective {\em home shard} $S_i$ 
is where it gets injected into.
%Each shard maintains the {\em pending queue}, which stores the newly injected transactions and removes them from the queue after being processed.
% Consider a transaction $T_i$. 
% A transaction is injected into one of the nodes of the system, say node $v_{T_i}$.
% The {\em home shard} of $T_i$ is the shard that contains $v_{T_i}$.
% Each shard that receives newly generated transactions acts as a home shard for those transactions. 
% A home shard maintains a {\em pending transactions queue}, which contains any newly generated transactions that were injected into it.
% The home shard is responsible for handling all pending transactions in its queue.  Moreover, the home shards of transaction $T$ do not need to be guaranteed to hold the accounts relevant to $T$.
Similar to~\cite{adhikari2024spaastable,adhikari2023lockless,adhikari2024fast,Byshard}, 
a transaction $T_i$ is represented as a collection of subtransactions $T_{i,a_1},\ldots,T_{i,a_j}$
where subtransaction $T_{i,a_l}$ accesses only objects in $\mathcal{O}_{a_l}$.
%and is associated with shard $S_{a_l}$.
%will be  specific shard $S_{a_l}$ and utilizes only the objects in the corresponding object set $O_{a_l}$. In other words, subtransaction $T_{i,a_l}$. 
Thus, subtransaction $T_{i,a_l}$ has a respective {\em destination shard} $S_{a_l}$
where it will ultimately be appended in the local ledger. 
% We provide the example of transactions and subtransactions in Appendix~\ref{example-of-transaction-and-subtransaction}.
% The home shard of $T_i$ will send subtransaction $T_{i,a_l}$ to shard $S_{a_l}$ for processing,
% where $T_{i,a_l}$ will be appended into the local blockchain of $S_{a_l}$.
%
%will send destination shards where the required objects for $T_i$ are located, and also verify the received subtransactions and append valid ones to its local ledgers, are called destination shards.
% The subtransactions within a transaction $T_i$ are independent 
% (i.e. they do not conflict, as explained below) 
% and can be processed concurrently.
%in any relative~order.
% Similar to previous work \cite{Byshard}, each subtransaction $T_{i,a_l}$
% has two parts: (i) a condition check, where it checks 
% whether a condition of the objects in $\mathcal{O}_{a_l}$
% is satisfied, and (ii) the main action, where it updates the 
% values of the objects in $\mathcal{O}_{a_l}$.
% \todo[inline]{subtransaction example}
%
An intra-shard transaction is a special case 
where the transaction accesses only accounts in the home shard. 
%we can just run the PBFT~\cite{PBFT} algorithm to reach the consensus among the nodes within shard $S_m$ and append those transactions into the local blockchain of $S_m$.
If a transaction accesses accounts outside the home-shard, then we refer to it as a {\em cross-shard} transaction.
In~\cite{Rapidchain}, it was observed that cross-shard transactions are most frequent ($99.98\%$ of all transactions for $16$ shards), thus we focus on the general case of cross-shard transactions.

Two transactions $T_i$ and $T_j$ {\em conflict} if they access a common object 
$o_k \in \mathcal{O}$ concurrently and at least one of them modifies (updates) the value  of $o_k$. For safe transaction processing, the corresponding subtransactions of $T_i$ and $T_j$ must be serialized in the same order across all involved (destination) shards.

\begin{example}
Similar to the works~\cite{Byshard,adhikari2024spaastable,adhikari2023lockless}, let us consider a transaction $T_1$ that accesses multiple accounts distributed across different shards. Suppose $T_1$ represents the following banking transaction: {\em ``Transfer 100 from Apsar's account to Prakriti's account, only if Jagan's account has at least 200.''}. Let us assume that the accounts of Apsar, Prakriti, and Jagan reside on different shards $S_a$, $S_p$, and $S_j$, respectively.

In our protocol, each home shard sends its transaction to the corresponding leader shard. Then the leader shard of $T_1$ splits the transaction $T_1$ it into the following subtransactions, with respect to the destination shards it accesses:
\end{example}

\begin{itemize}[itemindent=0.7cm]
    \item[$T_{1,a}$] {Condition:} Check if Apsar's account has at least 100.\\
    {Action:} Deduct 100 from Apsar's account.
    \item[$T_{1,p}$] {Action:} Add 100 to Prakriti's account.
    \item[$T_{1,j}$] {Condition:} Check if Jagan's account has at least 200.
\end{itemize}

To execute $T_1$, the leader shard first requests the current state of all involved accounts from the corresponding destination shards ($S_a$, $S_p$, and $S_j$) in parallel. Once the leader receives all the account states, it checks the validity of the transaction locally. If all conditions are satisfied (i.e., Jagan has at least 200 and Apsar has at least 100, and the involved accounts exist), then the leader \emph{pre-commits} $T_1$; otherwise, it aborts $T_1$.

Next, the leader constructs pre-committed, ordered batches of subtransactions for each destination shard, possibly combining $T_1$ with other transactions targeting the same shard (because the leader shard might reserve multiple transactions from home shards for scheduling). These batches are sent in parallel to the corresponding destination shards. Upon receiving their batch, each destination shard performs consensus on the batch and appends it to the local ledger.
In this way, the transaction $T_1$ is either consistently committed across all relevant shards or aborted entirely and preserves atomicity and consistency.

\paragraph{\bf Communication Model, Emit, and Events.} 
We  \sss{consider a partially synchronous communication model~\cite{dwork1988consensus}, which assumes that there exists a Global Stabilization Time (GST) after which all messages are guaranteed to be delivered within a known bounded delay ($\mathfrak{D}$), and this delay is known to all shards.
\sssp{
Similar to previous works in~\cite{Byshard,adhikari2024spaastable,adhikari2025efficiency}, we adopt cluster sending protocols from  \textit{the fault-tolerant cluster-sending problem}~\cite{clustersending} and Byshard~\cite{Byshard}, where shards run consensus before sending a message and assume that sending a message between two shards incurs a delay of at most~$\mathfrak{D}$ (after GST).
    %\newline
    Namely, for communication between {\em neighboring shards} $S_1$ and $S_2$, a set $A_1 \subseteq S_1$ of $2f_1+1$ nodes in $S_1$ and a set $A_2 \subseteq S_2$ of  $f_2 + 1$ nodes in $S_2$ are chosen (where $f_i$ is the maximum number of faulty nodes in shard $S_i$). Each node in $A_1$ is instructed to broadcast the message to all nodes in $A_2$. This ensures that at least one non-faulty node in $S_1$ will send the correct message value to a non-faulty node in $S_2$.  This mechanism tolerates up to $f_i < |S_i|/3$ Byzantine faults while guaranteeing reliable cross-shard communication. 
\newline
For communication between {\em non-neighboring shards} $S_0$ and $S_3$,
routing between them is fault-tolerant because of the following recursive argument. Consider a path between $S_0$ and $S_3$ is the shard graph $G_s$. Each edge $(S_1,S_2)$ on this path transfers messages reliably, based on the above argument about communication between neighboring shards $S_1,S_2$. Hence, by simple induction on the subsequent edges of this path, the destination shard receives the same message as the shard $S_0$ agreed to send to $S_3$ when executing consensus (inside its cluster $S_0$).
Please note that consensus is only needed, and applied, inside shards (each time independently on executions within other shards), not globally in the whole shards' network $G_s$. Since each shard is a clique (hence, no other assumptions for running consensus are needed apart of limiting the number of faulty nodes by third part of the shard), we do not need any additional connectivity assumptions on the whole~graph~$G_s$
}
% To ensure reliable communication between two shards, we adopt the cluster-based sending protocol from~\cite{Byshard} and assume that sending a message between two shards incurs a delay of at most~$\mathfrak{D}$ (after GST). 
Transactions that are generated before GST are accounted in the burstiness parameter $b$.}
% We consider a partial-synchronous communication model~\sss{\cite{dwork1988consensus}}, 
% where sending a message between two shards has a delay at most $\mathfrak{D}$ \sss{and to ensure reliable communication between two shards, we adopt the cluster-based sending protocol from~\cite{Byshard}} .
The parameter $\mathfrak{D}$ is determined by two delays:
(i) the local consensus worst time ($\delta_{cons}$) within a shard,
which we normalize to be at most one time unit (i.e., $\delta_{cons} = 1$); 
and (ii) the worst network delay ($\delta_{comm}$), network delay is the time it takes for a message (e.g., transactions or state request) to travel from one shard to another in the shard graph $G_s$, which can be arbitrary depending on the network delays.
Hence, $\mathfrak{D} = \delta_{cons} + \delta_{comm}$.
%communication delays are upper-bounded by a factor $\Delta_c$. Since shards are connected in a graph $G_s$ with diameter $D$, we define $\mathfrak{D}$ as the upper bound on communication delay between any two shards. The worst-case communication delay $\mathfrak{D}$ between two shards consists of the local consensus time ($\Delta_p$) within a shard and the worrst network delay ($\Delta_c$), and the longest distance $D$ between shards in $G_s$. If two shards are connected with unit distance, a message sent from one will reach the other within at most $\Delta_c + \Delta_p$ time steps.

We define two key terms 
%to describe 
describing
how messages and actions propagate within our scheduling algorithm: (i) {\em Emit}: Represents the act of sending a message or request asynchronously. 
E.g.,
%For instance, 
if shard $S_i$ receives a new transaction $T_k$ and there is some shard, say $S_j$, 
destined to process the transaction $T_k$,
%which processes the transactions, 
then shard $S_i$ emits a send request $T_k$ to shard $S_j$. (ii) {\em Event}: Represents the reception of an emitted message. This triggers an appropriate response. Events occur when messages reach their destination after a bounded delay (at most $\mathfrak{D}$) since their emit. In our example, upon receipt of the message, the destination shard $S_j$ performs 
%some 
specified
action to put the transaction into~the~processing~schedule.
%schedule and process the transactions.

%\vspace*{-2ex}
% \vspace{-3mm}
\paragraph{\bf Adversarial Model of Transactions' Generation.}
\label{subsec:adversarial-model}
The adversarial model~\cite{borodin2001adversarial,adhikari2024spaastable} we consider here models the process in which users  generate and inject 
%generates and injects the 
transactions into the sharded blockchain system continuously and arbitrarily with transaction injection rate at most $\rho$ and burstiness $b$, where $0 < \rho \leq 1$ and $b>0$. We 
%also consider 
assume
that each injected transaction accesses at most $k$ shards, and 
%each injected transaction 
adds a unit congestion to each accessing destination shard. \ra{The injected transactions are stored in the pending queue of the shard, and we examine and analyze the combined pending queue size throughout the system.} The adversary model is constrained in such a way that over any contiguous time interval of length 
$t > 0$, the congestion at any shard (number of transactions that access accounts in the shard) does not exceed 
$\rho t + b$ transactions. 
Here, $\rho$ represents the maximum transaction injection rate per unit time at each destination shard, and 
$b$ represents the adversary's ability to generate a burst (maximum number) of transactions within any unit of time,  \sss{including transactions generated before the Global Stabilization Time (GST) due to partial synchrony}.

\sssp{In the literature, there are basically two adversarial models: one is called a window type adversary (see e.g., \cite{borodin2001adversarial}), and the one considered in our work is called a leaky bucket adversary, see e.g.,~\cite{Andrews2001}. As discussed in~\cite{chlebus2012adversarial}, the former is a logically restricted form of the latter; however, in most considered applications of this adversarial framework, they are equivalent (see e.g., \cite{ROSEN2002237}).
An additional reason why we chose the leaky bucket model over the window adversarial model is the ability to easily account for transactions injected before the Global Stabilization Time (GST) as part of the burst (which needs to be resolved in the considered period after the GST). One of the advantages of having $b$ in our performance formulas is, among others, that we can see how the transactions injected before the GST may influence system performance even after the GST.}

\sssp{Similar to the literature in this topic~\cite{adhikari2024spaastable}, our current analysis focuses on the congestion of transaction processing capacity within shards. Note that transactions are processed only in the home and destination shards (or in the leader shard), but not in transit. 
        The delays based on the communication congestion, caused by transiting transactions between the home and destination shards, are accounted for by the parameter $\mathfrak{D}$ (see above, Communication Model). The parameter $\mathfrak{D}$ upper bounds both processing and network delays. Notation-wise, the delay caused by communication congestion is included in  $\delta_{\mathrm{comm}}$, which is the part of the parameter $\mathfrak{D}$.
        There was a vast number of papers, started by~\cite{borodin2001adversarial,Andrews2001}, analyzing the stability and delays caused by communication congestion. 
        %both from the perspective of link-congestion and node-congestion. 
        %The adversarial approach to modeling communication proved to be inspirational and versatile. 
        For instance,~\cite{Alvarez2005} applied adversarial models to capture phenomena related to the routing of packets with varying priorities and failures in networks.
Authors in~\cite{1316091}~addressed the impact of link failures on the stability of communication algorithms by way of modeling them in
adversarial terms.
\newline
        Depending on the communication model and setting (e.g., failures), our parameter $\delta_{\mathrm{comm}}$ could be replaced by the bounds obtained in those papers.
In this view, using a generic parameter $\delta_{\mathrm{comm}}$ actually makes our paper even more general, because it is not dependent on specific communication model -- as long as there exists a bound on $\delta_{\mathrm{comm}}$ in that model, one could substitute $\delta_{\mathrm{comm}}$ by that bound in our analysis.}

\section{Event-Driven Scheduler with Single Leader}
\label{sec:single-leader}
% \vspace*{-3mm} 
In this section, we design and analyze an event-driven single leader transaction scheduler that operates in a partial-synchronous communication model.
% The single leader scheduler is inspired by Ethereum Danksharding~\cite{danksharding}, where a single block proposer gathers all transaction data and constructs blocks. This simplifies the sharding model by reducing coordination overhead across shards. 

\noindent
{\bf High-level Idea:} In our approach, 
after a transaction $T_i$ is generated at the home shard, 
then, the home shard forwards the transaction to the leader shard $S_{ldr}$, which is responsible for scheduling the transactions. Upon receiving transaction information from multiple home shards, the leader shard determines the schedule length, i.e., the time required to process and commit these transactions. $S_{ldr}$ constructs a transaction conflict graph $G_\T$ and uses a greedy vertex coloring algorithm to determine schedule length and conflict-free commit order. If the computed schedule length is at least as long as the 
%previous 
previously computed
schedule length, the leader initiates the scheduling process. In order to properly determine the schedule length, the leader shard first requests the latest account states from the destination shards that hold the relevant accounts. Once the leader shard receives the account state information, 
% it constructs a transaction conflict graph and uses a greedy vertex coloring algorithm to determine a conflict-free commit order.
% Then,
the leader pre-commits the transactions according to the color they get in $G_\T$ and creates a batch order for each destination shard; this is possible because the leader has gathered all required account state information from the destination shards. This pre-committed batch is then sent to the respective destination shards. After receiving the pre-committed batch of subtransactions, each destination shard runs locally a consensus algorithm to reach an agreement on the received order and 
to append the respective transactions into their local blockchain. 
% In~\cite{adhikari2024spaastable} for the confirmation and commit shard n
%We continue with a detailed description of the algorithm.

%\vspace{-1mm}
\begin{algorithm*}[ht!]
% \tiny
\smaller[1]
\caption{{\sc EdSlScheduler}}
\label{alg:centralized-optimized-scheduler}

$S_{ldr}$: Leader shard;
$PQ_{ldr}$: Pending transactions queue in leader shard\;
$SQ_{ldr}$: Scheduled transactions queue in leader shard\;
$\mathcal{V}$: Version number of scheduling event, initially $\mathcal{V}=0$, and it increments\;
$LastEventSchLength$: Last scheduled event length, initially, 0\;

$BatchTxn(\mathcal{V})$: Batch transaction of version $\mathcal{V}$\;
$PrecommitSubTxnBatch(\mathcal{V}, S_j)$: Pre-committed subtransactions batch for shard $S_j$\;

\BlankLine
\textcolor{black}{\tcc{\bf Event on Home Shards:}}
\SetKwBlock{EventOne}{\bf Upon event: {\em new transaction $T_i$ generated on any home shard $S_i$}}{}
\EventOne{
    Shard $S_i$ {\em \textbf{Emits} transaction\_send($T_i$)} for $T_i$ to be received by leader shard $S_{ldr}$\;
}

\BlankLine
\textcolor{black}{\tcc{\bf Events on Leader Shard $S_{ldr}$:}}
\SetKwBlock{EventTwo}{\bf Upon event: {\em transaction\_send($T_i$)}}{}
\EventTwo{
    $S_{ldr}$ appends $T_i$ to pending queue $PQ_{ldr}$\;

    $S_{ldr}$ constructs transaction conflict graph $G_{\T}$ and colors it using greedy vertex coloring to determine schedule length $\lambda$ for transactions in $PQ_{ldr}$\;

    \If{$\lambda \geq LastEventSchLength$}{
       \tcp{\textcolor{black}{Scheduling Event Triggered; $S_{ldr}$ does the following:}}
      Update the version number: $\mathcal{V} \leftarrow \mathcal{V}+1$\;

      Move transactions  from $PQ_{ldr}$ to $SQ_{ldr}$ with version $\mathcal{V}$\;
      
      Create batch transaction $BatchTxn(\mathcal{V})$\;
      Assign the version $\mathcal{V}$ to the transaction conflict graph $G_\T$  as $G_{\T^{\mathcal{V}}}$\;
      $LastEventSchLength \leftarrow \lambda$\;

      Determine destination shards for each $T_i \in BatchTxn(\mathcal{V})$\;
      
      {\em \textbf{Emit} account\_state\_request($BatchTxn(\mathcal{V})$)} to be received by respective destination shards\;
    }
}

\SetKwBlock{EventFour}{\bf Upon event: {\em account\_state\_response($BatchTxn(\mathcal{V})$)}}{}
\EventFour{
    \If{All account states received for $BatchTxn(\mathcal{V})$}{
        According to the commit order, which was set previously in the transaction conflict graph $G_{\T^{\mathcal{V}}}$ of $BatchTxn(\mathcal{V})$, $S_{ldr}$ does the following\;

        \ForEach{color group $C_k$ in $G_{\T^{\mathcal{V}}}$}{
            Pre-commit or abort transactions $T_i\in C_k$\ by checking transactions and account state conditions according to the commit order determined by coloring\;
            If $T_i$ is pre-committed, split $T_i$ into subtransactions and create pre-committed subtransactions batch $PrecommitSubTxnBatch(\mathcal{V}, S_j)$ for each destination shard $S_j$\;
        }

        {\em \textbf{Emit} send\_precommit\_batch($PrecommitSubTxnBatch(\mathcal{V}, S_j)$)} to the corresponding destination shards $S_j$, for each shard $S_j$\;
    }
}

\SetKwBlock{EventEleven}{\bf Upon event: {\em final\_commit\_response($T_{i,j},\mathcal{V}$)}}{}
\EventEleven{
   If for all subtransactions $T_{i,j}$ of $T_i$, {\em final\_commit\_response($T_{i,j},\mathcal{V}$)} was emitted, remove $T_i$ from $SQ_{ldr}$\;
%    If all subtransactions of $T_i$ are committed, remove $T_i$ from $SQ_{ldr}$\;
    % Update $LastEventSchLength \leftarrow LastEventSchLength - |T_i|$\;
     If $T_i \in LastEventSchLength$ then $LastEventSchLength \leftarrow LastEventSchLength - |T_i|$\;
}

\BlankLine
\textcolor{black}{\tcc{\bf Events on Destination Shards:}}
\SetKwBlock{EventNine}{\bf Upon event: {\em account\_state\_request($BatchTxn(\mathcal{V})$)}}{}
\EventNine{
    \If{Requested account is available in shard $S_j$}{
        {\em \textbf{Emit} account\_state\_response($BatchTxn(\mathcal{V})$)} to $S_{ldr}$ and set respective account to {\em occupied}\;
    }
    \Else{Wait until the account is available and Emit account state response after wait\;}
}

\SetKwBlock{EventSix}{\bf Upon event: {\em send\_precommit\_batch($PrecommitSubTxnBatch(\mathcal{V}, S_j)$)}}{}
\EventSix{
    $S_j$ reaches consensus on $PrecommitSubTxnBatch(\mathcal{V}, S_j)$ and appends it to the local blockchain;
    and sets respective account to {\em available}\;
    $S_j$ {\em \textbf{Emit} final\_commit\_response($T_{i,j},\mathcal{V}$)} to $S_{ldr}$ shard\;
}

\end{algorithm*}

\noindent
{\bf Algorithm Details: }
Our Event-Driven Single-Leader Scheduler ({\sc EdSlScheduler}) Algorithm~\ref{alg:centralized-optimized-scheduler} is structured into three key phases: (i) transaction initiation at home shards, (ii) scheduling at the leader shard, and (iii) finalization at destination shards. Let us describe each phase in detail.

\noindent
{\bf \em (i) At the home shards:}  When a new transaction $T_i$ is generated,~it~is~immediately sent to the designated leader shard $S_{ldr}$ by emitting an event {\em transaction\_send($T_i$)}. We assume that all shards are aware of the leader’s ID.
%identity.

\noindent
{\bf \em (ii) At the leader shard:}  At the leader shard $S_{ldr}$ the transactions are maintained in two queues. The first one is a pending transaction queue ($PQ_{ldr}$) that stores received but unscheduled transactions. The second one is a scheduled transaction queue ($SQ_{ldr}$) that tracks transactions that are scheduled but awaiting for final commitment. Upon receiving a new transaction $T_i$ due to the emitting an event {\em transaction\_send($T_i$)} from a home shard, the leader appends transaction $T_i$ to $PQ_{ldr}$. As the pending queue may contain transactions from multiple shards, the leader calculates the schedule length $\lambda$, which represents an upper bound on the total processing time required for all transactions in $PQ_{ldr}$. To determine $\lambda$ and conflict-free commit order, leader shard $S_{ldr}$ constructs a transaction conflict graph $G_\T$ from transactions $PQ_{ldr}$ using greedy vertex coloring algorithm. If $\lambda$ is greater than or equal to the last recorded schedule length $LastEventSchLength$, the leader triggers a scheduling event. The leader then increments the version number  $\mathcal{V}$ of this scheduling event (initially set to $0$), moves transactions from $PQ_{ldr}$ to $SQ_{ldr}$, forms a transaction batch $BatchTxn(\mathcal{V})$, and set the version for $G_\T$ to $G_{\T^{\mathcal{V}}}$. It updates the last event schedule length $LastEventSchLength$ to $\lambda$ and determines the destination shards for each transaction. Next, the leader requests the latest account states from the destination shards by emitting an event {\em account\_state\_request($BatchTxn(\mathcal{V})$)}.

The versioning mechanism is introduced to handle the partial-synchronous and event-driven nature of our scheduling algorithm. Each scheduling event is associated with a version number $\mathcal{V}$ which is tagged in transactions, batch requests, and final committed transactions. This versioning is crucial for tracking the sequence of events and ensuring correct order processing. Since scheduling events can be triggered at any point in time, the leader shard may request account states from destination shards at different moments. Therefore, versioning helps the scheduler determine the correct state for each transaction batch. For instance, when a scheduling threshold (with respect to $\lambda$) is reached, the leader requests account states from destination shards. Without versioning, it would be ambiguous which batch the response corresponds to, potentially leading to inconsistencies. Additionally, versioning prevents conflicts by ensuring that account state modifications are correctly sequenced across multiple scheduling events.

Upon receiving the account state responses from the destination shards due to emit {\em account\_state\_response($BatchTxn(\mathcal{V})$} from destination shards, the leader shard pre-commit the transactions according to the execution order they get in colored conflict transaction graph $G_{\T^{\mathcal{V}}}$.
% the leader constructs a transaction conflict graph $G_{\T^{\mathcal{V}}}$ based on access dependencies among transactions. It then applies a greedy vertex coloring algorithm to determine a conflict-free execution order.
Transactions within the same color group $C_k$ are checked for pre-commit feasibility simultaneously based on account balance constraints (because same-color transactions are non-conflicting transactions, as they access different accounts and can be committed at the same time). This pre-commit in a leader is possible because the leader now has the latest account state information. Pre-committed transactions are divided into subtransactions according to the shards they access. Then, the leader creates pre-committed subtransaction batches $PrecommitSubTxnBatch(\mathcal{V}, S_j)$ for each destination shard $S_j$. Once all transactions in $G_{\T^{\mathcal{V}}}$ are processed, the leader sends the pre-committed batch orders to the respective destination shards 
%via 
by emitting
the event 
{\em send\_precommit\_batch($PrecommitSubTxnBatch(\mathcal{V}, S_j)$)}.

% \vspace*{-0.31mm}
\noindent
{\bf \em (iii) At the destination shards:} The event {\em account\_state\_request($BatchTxn(\mathcal{V})$)} is triggered upon receiving an account state request from the leader shard. If the requested account is available, this indicates that there is no other transaction in progress (the account state is read but has not yet finally committed). Then the destination shard set the account to {\em occupied} to prevent conflicting access and responds with the current account state by emitting the event {\em account\_state\_response($BatchTxn(\mathcal{V})$)}. 
% \dk{??? UPDATES AFTER SENDING OR BEFORE (WHICH WOULD MEAN SENDING WITH UPDATED VERSION) ???}
Otherwise, if an ongoing transaction is modifying the account, the destination shard delays its response until the current state is finalized.

Similarly, when a destination shard receives a pre-committed batch from the leader, the event {\em send\_precommit\_batch($PrecommitSubTxnBatch(\mathcal{V}, S_j)$)} is triggered. The destination shard then executes a consensus on the received batch and appends the finalized transactions to its local blockchain. After finalizing the transactions (subtransactions), the shard resets the account to available, allowing new transactions to access the account.
The destination shard then emits a final commit response {\em final\_commit\_response($T_{i,j},\mathcal{V}$)} to the leader, confirming successful commitment. Once the leader receives the final commit responses for all subtransactions of a given transaction $T_i$, 
%Then, the leader 
it
removes $T_i$ from $SQ_{ldr}$. If $T_i$ belonged to the most recent scheduled batch, the leader reduces the last recorded schedule length by the number of committed transactions.

% \vspace{-4mm}
\subsection{Analysis for Algorithm \ref{alg:centralized-optimized-scheduler}}
% \vspace{-2mm}
Recall that each transaction accesses at most $k \geq 1$ out of $s \geq 1$ shards, and the burstiness is $b\geq 1$. Suppose $\mathfrak{D}$ is the upper bound on the local processing time and the communication delay between any two shards as described in Section~\ref{preliminaries}.
For the sake of analysis, we divide time into a sequence of intervals $I_1, I_2, \ldots,$ each of duration $\tau$ as defined below.
%\[
%$\tau := 16b\cdot \min\{k, \sqrt{s}\} + 48\mathfrak{D}$.
%\ .
%\]
To simplify the analysis, we also introduce an artificial interval $I_0$ preceding the start of time, during which no transactions are present. Let the maximum transaction injection rate be $\rho'$, as defined below,
%$\rho' \leq \max\left\{ \frac{1}{16k}, \frac{1}{16\lceil \sqrt{s} \rceil} \right\}$ 
and let $\zeta$ 
%$\zeta = (2b+\rho\cdot 48\mathfrak{D})s$ 
denote the maximum number of transactions that can be generated in an interval:
% \vspace{-3mm}
%\begin{eqnarray*}
\[
    \tau 
    %& := & 
    :=
    16b\cdot \min\{k, \sqrt{s}\} + 48\mathfrak{D} \ , \ \ 
    \rho'
    %& := & 
    :=
    \max\left\{ \frac{1}{16k}, \frac{1}{16\lceil \sqrt{s} \rceil} \right\} \ ,
\]

\[
    \zeta
    %& := & 
    :=
    (2b+\rho\cdot 48\mathfrak{D})s \ .
\]

%\end{eqnarray*}

% \vspace{-3mm}
\begin{lemma}
\label{lemma:max-injection-rate}
In Algorithm \ref{alg:centralized-optimized-scheduler},
for transaction generation rate
$\rho \leq \rho'$
and $b\geq 1$,
in any interval $I_z$ of length $\tau$, where $z \geq 1$, there are at most $\zeta$ new transactions generated within  $I_z$.
\end{lemma}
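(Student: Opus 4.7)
The plan is to reduce the lemma to a direct calculation: bound the number of transactions generated in $I_z$ by aggregating the per-shard congestion constraints of the leaky-bucket adversary, and then verify, under the hypothesis $\rho \le \rho'$, that the resulting upper bound is at most $\zeta$.

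First, I would apply the adversarial constraint from Section~\ref{preliminaries} to each individual shard: in any contiguous interval of length $\tau$, each shard $S_j$ receives at most $\rho\tau + b$ transactions that access an account in $S_j$. Summing this inequality over all $s$ shards gives
\[
    \sum_{j=1}^{s} \bigl(\text{congestion of } S_j \text{ during } I_z \bigr) \;\le\; s(\rho\tau + b).
\]
Since every transaction must access at least one destination shard and therefore contributes at least one unit to the left-hand side, the number $N$ of transactions generated in $I_z$ satisfies $N \le s(\rho\tau + b)$.

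Next, I would substitute $\tau = 16b\cdot\min\{k,\sqrt{s}\} + 48\mathfrak{D}$ and expand:
\[
    N \;\le\; 16\rho b s\cdot\min\{k,\sqrt{s}\} \;+\; 48\rho s \mathfrak{D} \;+\; bs.
\]
To reach $\zeta = (2b + 48\rho\mathfrak{D})s$, it suffices to bound the first summand above by $bs$, i.e., to show $16\rho\min\{k,\sqrt{s}\} \le 1$. The small subtlety is that $\rho'$ uses $\lceil\sqrt{s}\rceil$ while $\tau$ uses $\sqrt{s}$; but $\lceil\sqrt{s}\rceil \ge \sqrt{s}$ implies $\min\{k,\lceil\sqrt{s}\rceil\} \ge \min\{k,\sqrt{s}\}$, so the hypothesis yields
\[
    \rho \;\le\; \rho' \;=\; \frac{1}{16\,\min\{k,\lceil\sqrt{s}\rceil\}} \;\le\; \frac{1}{16\,\min\{k,\sqrt{s}\}},
\]
which closes the inequality and hence the proof.

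No genuine obstacle arises: the lemma is really a consistency check that the parameter triple $(\tau,\rho',\zeta)$ has been chosen compatibly with the adversarial model. The only thing to watch out for in the write-up is the $\sqrt{s}$ vs.\ $\lceil\sqrt{s}\rceil$ indexing mismatch, handled by the one-line monotonicity observation above; everything else is a direct unpacking of the leaky-bucket constraint and elementary algebra.
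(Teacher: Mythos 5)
Your proposal is correct and follows essentially the same route as the paper's proof: apply the leaky-bucket bound $\rho\tau+b$ per shard, use the fact that each transaction contributes at least one unit of congestion to some shard to multiply by $s$, and absorb the $16\rho b\,\min\{k,\sqrt{s}\}$ term into $b$ via $\rho\le\rho'$. Your explicit handling of the $\sqrt{s}$ versus $\lceil\sqrt{s}\rceil$ mismatch is a nice touch that the paper leaves implicit, but the argument is the same.
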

% \vspace{-2mm}
\begin{proof}
    According to the definition of the adversary,
    during any interval $I_z$ of length $\tau$, the maximum congestion added to any shard is:
    \begin{equation}
    \label{eqn:transaction-generation-rate-alg1}
    \rho \cdot \tau + b \leq b + \rho \cdot 48\mathfrak{D} +b = 2b+ \rho\cdot 48\mathfrak{D}
    \ .
    \end{equation}

% \noindent \vspace{-1mm}
        Moreover, since each transaction accesses at least one shard and there are at most $s$ shards, 
        %Thus, 
        the total number of new transactions generated in  $I_z$~is~at~most
$     (2b+\rho\cdot 48\mathfrak{D})s = \zeta$.
%     \ .
    %This proves Lemma~\ref{lemma:max-injection-rate}.
\end{proof}
% \vspace{-3mm}

\begin{lemma}
\label{lemma:time-required-for-scheduler-when-event-triggered}
In Algorithm~\ref{alg:centralized-optimized-scheduler},
for transaction generation rate $\rho \leq \rho'$
and $b\geq 1$
in any interval $I_z$ of length $\tau$ (where $z \geq 1$), if there are at most $\zeta$ pending transactions, then it takes at most $\frac{\tau}{4}$ time units to commit all $\zeta$ transactions.
\end{lemma}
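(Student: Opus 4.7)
The plan is to combine three ingredients: (i) an upper bound of $\mu := 2b + \rho\cdot 48\mathfrak{D}$ on the per-shard congestion produced by the $\zeta$ pending transactions, derived from the adversarial constraint over an interval of length $\tau$; (ii) an upper bound on the schedule length $\lambda$ returned by greedy vertex coloring of the conflict graph $G_\T$; and (iii) an accounting of the constant number of cross-shard communication rounds that the leader performs per scheduling event. I will then verify that the total contribution is at most $\tau/4$.

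For step~(i), I would reuse the derivation in Lemma~\ref{lemma:max-injection-rate}. By the adversary, the congestion added to any shard in a window of length $\tau$ is at most $\rho\tau + b$; the key algebraic step $16\rho\min\{k,\sqrt{s}\} \le 1$, which follows directly from $\rho\le\rho'$, yields $\rho\tau + b \le 2b + 48\rho\mathfrak{D} = \mu$. Hence at most $\mu$ of the $\zeta$ pending transactions access any one shard.

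For step~(ii), the degree bound in $G_\T$ is immediate: any transaction conflicts with at most $\mu - 1$ others per accessed shard and accesses at most $k$ shards, giving $\lambda \le k\mu$. The complementary bound $\lambda \le \lceil\sqrt{s}\rceil\cdot\mu$ requires a structural counting argument that exploits the total budget of $s\mu$ shard-accesses collected over all $\zeta$ pending transactions; I would use a ball-growth/Tur\'an style argument in the spirit of the single-leader analysis of~\cite{adhikari2024spaastable}, iteratively removing low-degree color classes. Combining, $\lambda \le \min\{k,\lceil\sqrt{s}\rceil\}\cdot\mu$.

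For step~(iii), processing a single color class at a destination shard costs at most $\delta_{\mathrm{cons}} = 1$ unit of local work (since same-color transactions are conflict-free and consensus is normalized to one unit), so the $\lambda$ colors contribute at most $\lambda$ time units of local processing. The scheduling event itself involves only a constant number of cross-shard trips (account-state request/response and pre-commit batch send/final-commit response), each bounded by $\mathfrak{D}$, hence at most $c\mathfrak{D}$ for some small constant $c\le 9$. Using $\rho\min\{k,\lceil\sqrt{s}\rceil\}\le 1/16$, the total commit time satisfies
\[
\lambda + c\mathfrak{D} \;\le\; \min\{k,\sqrt{s}\}\cdot\mu + c\mathfrak{D} \;\le\; 2b\min\{k,\sqrt{s}\} + 3\mathfrak{D} + c\mathfrak{D} \;\le\; 4b\min\{k,\sqrt{s}\} + 12\mathfrak{D} \;=\; \tfrac{\tau}{4}.
\]
The main obstacle I anticipate is justifying the $\lceil\sqrt{s}\rceil\cdot\mu$ bound on the chromatic number of $G_\T$: the degree bound alone yields only $k\mu$, which is insufficient whenever $k > \sqrt{s}$, and the $\min\{k,\sqrt{s}\}$ factor in $\tau$ critically depends on this finer structural argument.
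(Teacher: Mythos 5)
Your skeleton matches the paper's proof: bound the per-shard congestion by $\mu = 2b+\rho\cdot 48\mathfrak{D}$, bound the number of colors needed for $G_\T$, add a constant number of $\mathfrak{D}$-bounded communication steps, and check the sum against $\tau/4 = 4b\min\{k,\lceil\sqrt{s}\rceil\}+12\mathfrak{D}$. Steps (i) and (iii), and the whole case $k\le\lceil\sqrt{s}\rceil$, are essentially the paper's argument. But the crux of the lemma is exactly the part you defer: the $O(\mu\lceil\sqrt{s}\rceil)$ bound on the chromatic number when $k>\lceil\sqrt{s}\rceil$. You gesture at a ``ball-growth/Tur\'an style argument, iteratively removing low-degree color classes'' without carrying it out, and you yourself flag it as the main obstacle. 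That is a genuine gap: without it the lemma only holds with $\min\{k,\lceil\sqrt{s}\rceil\}$ replaced by $k$, which is not enough. The paper closes it with a short, elementary heavy/light split: transactions accessing more than $\lceil\sqrt{s}\rceil$ shards number at most $\mu\lceil\sqrt{s}\rceil$ (otherwise total congestion would exceed $\mu s$, contradicting Lemma~\ref{lemma:max-injection-rate}), and each receives a unique color ($\Gamma_1 = \mu\lceil\sqrt{s}\rceil$); the remaining light transactions induce a subgraph of degree at most $(\mu-1)\lceil\sqrt{s}\rceil$, so greedy coloring uses $\Gamma_2 \le (\mu-1)\lceil\sqrt{s}\rceil+1$ more colors. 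You should supply this (or an equivalent) argument rather than cite it as an anticipated difficulty.

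A secondary problem is that your final arithmetic silently assumes the \emph{single}-factor bound $\lambda \le \mu\lceil\sqrt{s}\rceil$, which is stronger than what the heavy/light decomposition yields, namely $\Gamma = \Gamma_1+\Gamma_2 \approx 2\mu\lceil\sqrt{s}\rceil$. If you substitute the factor-2 bound into your inequality with the loose communication constant $c=9$, you get roughly $4b\lceil\sqrt{s}\rceil + 6\mathfrak{D} + 9\mathfrak{D} = 4b\lceil\sqrt{s}\rceil + 15\mathfrak{D} > \tau/4$, so the chain breaks. The paper avoids this by accounting the communication cost as $4\mathfrak{D}+2$ (home-to-leader $\mathfrak{D}$, state request/response $2\mathfrak{D}$, batch send $\mathfrak{D}$, plus two unit-time consensus steps) and by keeping the $-\lceil\sqrt{s}\rceil$ slack from the degree bound, which together bring the total under $4b\lceil\sqrt{s}\rceil+12\mathfrak{D}$. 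So even once you prove the coloring bound, you need either to establish the single-$\mu$ version (which you have not argued and which the standard decomposition does not give) or to redo the bookkeeping with the tighter communication count.
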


% \vspace{-3mm}
\begin{proof}
  We calculate the time required to process $\zeta$ transactions 
             based on the magnitude~of~$k$~and~$\lceil\sqrt{s}\rceil$.
             \begin{itemize}[leftmargin=*]
                 \item {\bf Case 1:  $k \leq \lceil \sqrt{s} \rceil$.}
        %We have $\T_A$ transactions that need to be process and
    From Equation \ref{eqn:transaction-generation-rate-alg1} in the proof of Lemma~\ref{lemma:max-injection-rate}, the congestion of transactions per shard is $2b+\rho\cdot 48\mathfrak{D}$. 
    %transactions. 
    Since each transaction accesses at most $k$ shards, each 
    %$\T_A$ 
    %\dk{stale}
    pending
    transaction conflicts with at most $((2b+\rho\cdot 48\mathfrak{D})-1)k$ other %$\T_A$ 
    %\dk{stale}
    pending
    transactions. Consequently, the highest degree $\Delta$ in the transactions conflict graph $G_\T$ 
    % \dk{(containing 
    % %$\T_A$ 
    % \dk{stale}
    % transactions in the beginning of interval $I_z$)}
    is $\Delta \leq ((2b+\rho\cdot 48\mathfrak{D})-1)k$. Hence, a greedy vertex coloring algorithm on $G_\T$ assigns at most $\Delta+1 \leq ((2b+\rho\cdot 48\mathfrak{D}) -1)k+1$ colors. Since $k \geq 1$, the total length of the required time interval to schedule and commit/abort the transactions is calculated as follows. Scheduling Algorithm \ref{alg:centralized-optimized-scheduler} consists of different steps, the initial step being to send the information about the transaction from the home shard to the leader shard, which takes at most $\mathfrak{D}$ time units.
    %to send the transaction information to leader shard. 
    Similarly, the leader shard takes $1$ time unit for graph construction and coloring (as we consider local consensus time $\delta_{cons}=1$).
    Moreover, the leader asks the destination shards for the states of their accounts, 
    %to destination shards, 
    and the destination shards respond with the states of their accounts. This combined takes $2\mathfrak{D}$ time units. Additionally, leader shards take $\Delta+1$ time units to pre-commit each conflicting transaction serially, and $\mathfrak{D}$ time units to send the committed subtransaction batch to the destination shard. Finally, the destination shard takes $1$ time unit to reach consensus on the received order of subtransactions batch and append them on their local blockchain. Since $\rho \leq \rho'$, we have $\rho \cdot48\mathfrak{D} \leq \frac{3\mathfrak{D}}{k}$.
    Hence, the total combined required time length can be calculated as 
    %In this derivation, we use the bound $\Delta+1 \leq ((2b+\rho\cdot 48\mathfrak{D}) -1)k+1$, obtained earlier, and we can replace $\rho \cdot24\mathfrak{D}$ with $\frac{3\mathfrak{D}}{k}$ as 
    %$\rho \cdot48\mathfrak{D} \leq \frac{3\mathfrak{D}}{k}$ 
    %to get the time unit count in the interval $I_{z}$: 
 %
% \vspace{-3mm}
 \begin{eqnarray*}
    \mathfrak{D} \ + &2\mathfrak{D}&+ 
    \ 1 + (\Delta+1) + \mathfrak{D} +1 \leq \\&4\mathfrak{D}& + \ 2 + 2bk +\frac{3\mathfrak{D}}{k}k -k +1
    %$$ $$ 
    \leq  2bk + 9\mathfrak{D} \leq \frac{\tau}{4}\ .
\end{eqnarray*}
% \vspace{-6mm}
\item   {\bf Case 2: $k> \lceil \sqrt{s}\rceil$.}
     We classify the $\zeta$ 
     %\dk{stale} 
     pending transactions into two groups, the ``heavy'' transactions, which access more than $\lceil \sqrt{s}\rceil$ shards, and ``light'' transactions, which access at most $\lceil \sqrt{s}\rceil$ shards.
       The maximum number of heavy 
       %$\T_A$ 
       transactions can be $(2b+\rho\cdot 48\mathfrak{D})\lceil \sqrt{s}\rceil$. If there were more, the total congestion 
       %of $\zeta$ transactions 
       would be strictly greater than $(2b+\rho\cdot 48\mathfrak{D})\lceil \sqrt{s}\rceil \cdot \sqrt{s} \geq (2b+\rho\cdot 48\mathfrak{D})s$, which is not possible 
       % \dk{??? because ???}
       \ra{because of Lemma~\ref{lemma:max-injection-rate}}.
    A coloring algorithm for the conflict graph $G_\T$ can assign each of the heavy transactions a unique color, which can be done using at most $\Gamma_1 =(2b+\rho\cdot 48\mathfrak{D})\lceil \sqrt{s}\rceil$~colors.
      
 %\noindent    
    The remaining transactions are light. Let $G'_\T$ be the subgraph of the conflict graph $G_\T$ 
    %with 
   induced by
    the light transactions. Each light transaction conflicts with at most $((2b+\rho\cdot 48\mathfrak{D})-1)\lceil \sqrt{s}\rceil$
     other light transactions.
     Hence, 
     the degree of $G'_\T$ is at most $(2b + \rho \cdot 48 \mathfrak{D}-1)\lceil \sqrt{s}\rceil$.
     Thus, $G'_\T$ can be colored with at most $\Gamma_2 = (2b + \rho \cdot 48 \mathfrak{D}-1)\lceil \sqrt{s}\rceil + 1$~colors.

    Consequently, the conflict graph $G_\T$ can be colored with at most $\Gamma = \Gamma_1+\Gamma_2 = (2b+ \rho \cdot 48\mathfrak{D})\lceil \sqrt{s}\rceil + (2b + \rho \cdot 48 \mathfrak{D}-1)\lceil \sqrt{s}\rceil + 1$ colors,
    which implies $\Gamma \leq (4b+ 2 \rho \cdot 48\mathfrak{D} -1)\lceil \sqrt{s} \rceil + 1$. Since $s\geq 1$ and %\ra{we can replace $\rho \cdot 48\mathfrak{D}$ with $\frac{3\mathfrak{D}}{\lceil \sqrt{s}\rceil}$ as } 
    $\rho \cdot 48\mathfrak{D} \leq \frac{3\mathfrak{D}}{\lceil \sqrt{s}\rceil}$, the 
    %length of 
    number of time units needed within the interval
    $I_{z}$ is at most:

    % \vspace{-2mm}
  \begin{eqnarray*}
    &\mathfrak{D}&+ \ 2\mathfrak{D}+1 + \Gamma + \mathfrak{D} +1
    = 4\mathfrak{D} +2+  \\
    & &
    \left(4b  + 2 \frac{3\mathfrak{D}}{\lceil \sqrt{s}\rceil} - 1\right)\lceil \sqrt{s}\rceil +1
    %$$ $$
    \leq  4b \lceil \sqrt{s}\rceil + 12\mathfrak{D} \leq \frac{\tau}{4} \ . 
 \end{eqnarray*}

\end{itemize}
\end{proof}

\begin{lemma}
\label{lemma:centralized-scheduler-transaction-generation}
In Algorithm \ref{alg:centralized-optimized-scheduler},
for transaction generation rate
$\rho \leq \rho'$
and $b\geq 1$,
in any interval $I_z$ of length $\tau$, where $z \geq 1$, it holds that:\\
(i) transactions that are scheduled at the last event at $I_z$ will be committed by at most the middle of $I_{z+1}$, and 
(ii) all the transactions generated in the interval $I_z$ will
be committed or aborted by the end of $I_{z+1}$.
\end{lemma}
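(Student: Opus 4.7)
The plan is to prove (i) and (ii) by induction on $z$, chaining together Lemma~\ref{lemma:max-injection-rate} (at most $\zeta$ transactions are injected in any interval of length $\tau$) and Lemma~\ref{lemma:time-required-for-scheduler-when-event-triggered} (any scheduled batch of at most $\zeta$ transactions commits within $\tau/4$ time units from when its scheduling event fires). The inductive invariant I would maintain is: at the start of interval $I_z$, the only unfinished work in the system is the batch produced by the last scheduling event in $I_{z-1}$, which, by the inductive instance of (i), finishes committing by the middle of $I_z$, at which point $LastEventSchLength$ drops to $0$ and the pending queue $PQ_{ldr}$ contains only transactions generated in $I_z$. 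The base case $z=1$ follows from the artificial empty interval $I_0$.

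For part (i), let $t^\star \le$ end of $I_z$ denote the time of the last scheduling event in $I_z$. By the invariant, every transaction in that batch was generated within $I_z$, so by Lemma~\ref{lemma:max-injection-rate} the batch contains at most $\zeta$ transactions. Applying Lemma~\ref{lemma:time-required-for-scheduler-when-event-triggered} bounds the commit time by $t^\star + \tau/4 \le $ end of $I_z$ $+\tau/4$, which lies strictly before the middle of $I_{z+1}$ (at end of $I_z$ $+\tau/2$), proving (i).

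For part (ii), each transaction $T$ generated in $I_z$ is either (a) included in the last scheduling event of $I_z$, in which case (i) already delivers commitment by the middle of $I_{z+1}$, or (b) still in $PQ_{ldr}$ after that event, which can happen only if $T$ arrives after $t^\star$. Because $LastEventSchLength$ decreases as subtransactions finalize (via the \emph{final\_commit\_response} handler), it reaches $0$ by the middle of $I_{z+1}$, so the trigger condition $\lambda \ge LastEventSchLength$ is satisfied at the next arrival and a fresh event fires. The set carried into that event consists of the residual transactions from $I_z$ plus those newly generated in the first half of $I_{z+1}$, all confined to a window of length at most $\tau$ and thus bounded by $\zeta$; invoking Lemma~\ref{lemma:time-required-for-scheduler-when-event-triggered} once more yields commitment within an additional $\tau/4$ time units, well inside $I_{z+1}$.

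The main obstacle I expect is the bookkeeping around the condition $\lambda \ge LastEventSchLength$: one has to rule out a transaction lingering indefinitely in $PQ_{ldr}$ merely because the condition was false at every prior arrival. The argument hinges on the monotone decrement of $LastEventSchLength$ during commitment, which guarantees that once the in-flight batch completes the condition is unblocked, together with keeping every per-event batch size below $\zeta$ so that Lemmas~\ref{lemma:max-injection-rate} and~\ref{lemma:time-required-for-scheduler-when-event-triggered} can be re-applied cleanly at each inductive step without accumulation of unprocessed work.
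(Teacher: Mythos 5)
Your overall strategy (induction on $z$, combining Lemma~\ref{lemma:max-injection-rate} with Lemma~\ref{lemma:time-required-for-scheduler-when-event-triggered}) matches the paper's, but your inductive invariant is too strong and this creates a genuine gap. You assert that at the start of $I_z$ the only unfinished work is the batch from the last scheduling event of $I_{z-1}$, and later that once that batch completes, $PQ_{ldr}$ ``contains only transactions generated in $I_z$.'' This is false in general: transactions generated in $I_{z-1}$ \emph{after} the last scheduling event of $I_{z-1}$ never triggered an event in $I_{z-1}$ and are still sitting unscheduled in $PQ_{ldr}$ when $I_z$ begins. Consequently your claim in part (i) that ``every transaction in that batch was generated within $I_z$,'' hence the batch has at most $\zeta$ transactions and needs only $\tau/4$, undercounts: the last event $E_X$ of $I_z$ can carry up to $2\zeta$ transactions (the leftovers from $I_{z-1}$ plus those generated in $I_z$ up to $E_X$), requiring up to $\tau/4+\tau/4=\tau/2$ --- which lands exactly at the middle of $I_{z+1}$ with no slack. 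The paper's proof is organized around precisely this bookkeeping: it introduces $\T_A$ (post-event leftovers from $I_{z-1}$) and $\T_B$ (new arrivals in $I_z$) and splits into cases according to where the previous event lies (in $I_{z-1}$, earlier in $I_z$, or before $I_{z-1}$), showing in each case that the two $\tau/4$ contributions sum to at most $\tau/2$.

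The same undercounting recurs in your part (ii): you bound the residual transactions from $I_z$ together with those generated in the first half of $I_{z+1}$ by a single $\zeta$ on the grounds that they lie in ``a window of length at most $\tau$,'' but that window stretches from the last event of $I_z$ to the middle of $I_{z+1}$ and can exceed $\tau$; the paper instead bounds them separately as $\T_C\le\zeta$ and $\T_D\le\zeta$, each costing $\tau/4$, and adds these to the $\tau/2$ from property (i) to get exactly $\tau$ --- again with no slack, so the accounting cannot be loosened the way your argument implicitly does. You also need an argument for the case where \emph{no} scheduling event occurs in $I_z$ (the paper's Case~1, handled by contradiction); your invariant silently assumes an event always fires. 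Your observation about the monotone decrement of $LastEventSchLength$ unblocking the trigger condition is a legitimate ingredient, but it does not by itself repair the batch-size accounting above.
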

% \vspace{-1mm}
\begin{proof} 
    We prove property (i) and (ii) together by induction on the number $z$ of the interval.
    The induction basis trivially holds for $z=0$,
    as no transactions are generated at $I_0$.
    By the inductive hypothesis, assume that
    properties (i) and (ii) hold for any interval $I_j$, where $j\leq z-1$. A visual representation of the scheduling event is shown in Figure~\ref{fig:example_alg1}.

    \begin{figure}[!ht]
  \centering
  \includegraphics[width=0.49\textwidth]{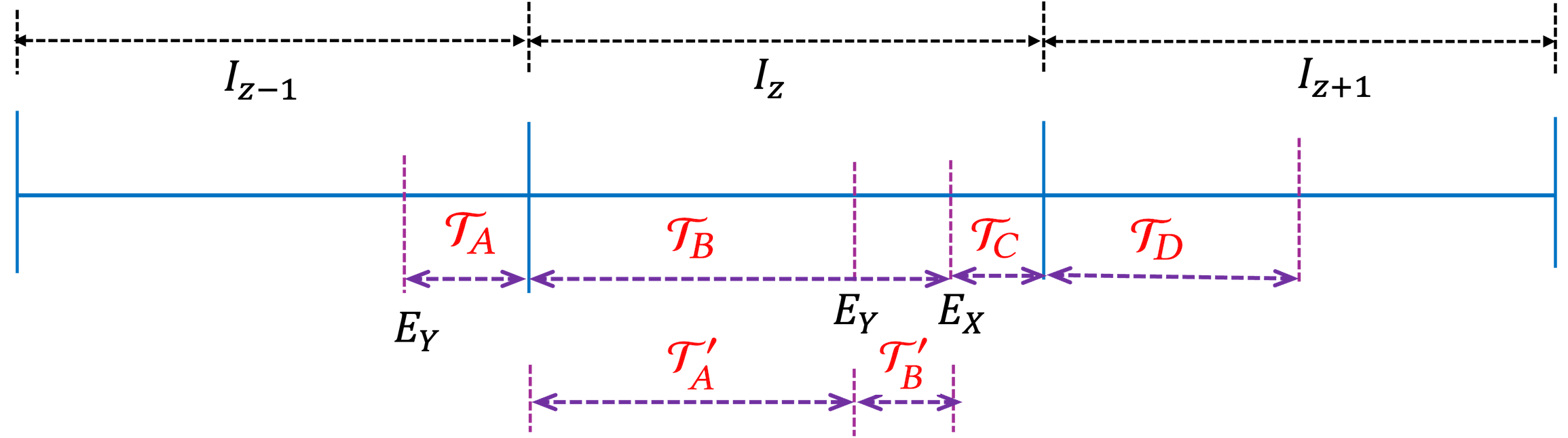}

  \caption{Simple representation of scheduling event triggered in Algorithm~\ref{alg:centralized-optimized-scheduler}.}
  \label{fig:example_alg1}
  % \vspace{-7mm}
\end{figure}

% \textcolor{red}{{\bf $\T'_B$}}

\noindent
    \\{\bf\em Inductive step:}
    we prove the same property for interval $I_{z}$.

\begin{itemize}[leftmargin=*]
    \item  {\bf Case 1:} {\em No scheduling event occurs at $I_z$}.\\
        In this case, we show that no transaction can be generated at $I_z$.
        Suppose, for the sake of contradiction, that at least one transaction 
        has been generated at $I_z$. 
        Then, there are two subcases, and we obtain a contradiction in both.
        %This holds
        %is true 
        
        \begin{itemize}[leftmargin=*]
            \item {\bf Case 1.a:} {\em There was a scheduling event at $I_{z-1}$.}\\            
        Let $E_Y$ be the event at $I_{z-1}$. Based on the induction hypothesis, all the transactions scheduled at $E_Y$ will be committed by the middle of $I_z$.
        Since we assumed at least one transaction is generated at $I_z$, an event must be triggered in $I_z$ at the latest right after the middle of $I_z$. This contradicts that there is no event in $I_z$.

        \item {\bf Case 1.b:} {\em There was no scheduling event at $I_{z-1}$ (any scheduling event occurred at $I_{z-2}$ or earlier).}\\
        Case 1.b is not possible because of the following explanation. Let $E_L$ be the last event before $I_{z-1}$. By induction hypothesis, all transactions scheduled due to $E_L$ have already been committed by the middle of $I_{z-1}$ or earlier. If any transaction was generated after $E_L$, then that would immediately trigger an event at $I_z$ or earlier, which contradicts that there is no event after $E_L$. If $E_L$ does not exist, some event must be triggered at or before $I_{z}$, which is also a contradiction.

        \end{itemize}
        \item {\bf Case 2:} {\em A scheduling event occurs at $I_z$.}\\
        Let $E_X$ be the last event at $I_z$ (see Figure~\ref{fig:example_alg1}). There are two subcases:
        \begin{itemize}[leftmargin=*]
            \item  {\bf Case 2.a:} {\em The previous scheduling event $E_Y$ occurs~at interval~$I_{z-1}$}. 
            Let $\T_A \leq \zeta = (2b+\rho\cdot 48\mathfrak{D})s$ (from Lemma~\ref{lemma:max-injection-rate})
            be the number of transactions generated during the interval $I_{z-1}$ after event $E_Y$.
            From Lemma~\ref{lemma:time-required-for-scheduler-when-event-triggered}, the time required to commit these $\T_A$ transactions is at most $\frac{\tau}{4}$.
            By the induction hypothesis, all the $\T_A$ transactions will commit or abort by the end of $I_z$.

Let $\T_B \leq \zeta$ (from Lemma~\ref{lemma:max-injection-rate}) denote the number transactions generated in $I_z$ up to $E_X$. From 
 Lemma~\ref{lemma:time-required-for-scheduler-when-event-triggered}, the total time required to schedule the $\mathcal{T}_B\leq \zeta$ transactions is at most $\frac{\tau}{4}$. 
In event $E_X$ the transactions in $T_A$ and $T_B$ are being processed.
Since the transactions of $T_A$ are scheduled to commit or abort by the end of $I_z$,
the transactions of $T_B$ can be scheduled in the first $\frac{\tau}{4}$ time units of $I_{z+1}$. 
This proves the property (i) as $\frac{\tau}{4}<\frac{\tau}{2}$ (i.e. the middle of $I_{z+1}$).

\item 
\ra{\bf Case 2.b:} {\em The previous scheduling event $E_Y$ occurs at interval $I_z$.} From the above proof and our hypothesis, if there are stale transactions from $I_{z-1}$, they are committed within $I_z$. Therefore, the required schedule length in $I_{z+1}$ is only for the transactions that are generated in $I_z$. 

When the previous scheduling event $E_Y$ occurs at interval $I_z$ there are at most $\T_A' \leq \zeta$ transactions (from Lemma~\ref{lemma:max-injection-rate} and design of our algorithm), which corresponds to at most $\frac{\tau}{4}$ required time units (from Lemma~\ref{lemma:time-required-for-scheduler-when-event-triggered}) to get processed.
The transactions generated between events $E_Y$ and $E_X$ in $I_z$, 
are $\T_B' \leq \zeta$ (from Lemma~\ref{lemma:max-injection-rate}) 
and require $\frac{\tau}{4}$ time to be completed from Lemma~\ref{lemma:time-required-for-scheduler-when-event-triggered}).
Thus, the total scheduling time needed for all transactions generated in $I_z$ up to event $E_X$ is at most:
$
\frac{\tau}{4} + \frac{\tau}{4} = \frac{\tau}{2}.
$
Hence, all these transactions will be completed by the middle of $I_{z+1}$.
This proved property (i).

\item {\bf Case 2.c:} {\em The previous scheduling event $E_Y$ happens before $I_{z-1}$}.
In this case, all stale transactions from intervals $I_{z-2}$ or earlier
are being completed by the end of $I_{z-1}$ (by the inductive hypothesis).
Note that there could not be new transactions generated in $I_{z-1}$,
since this would trigger a scheduling event before the end of $I_{z-1}$.
Therefore, at the beginning of interval $I_z$, there are no stale transactions pending from previous intervals.

Let us now consider the transactions generated within $I_z$. Let $\T_B \leq \zeta$ (Lemma~\ref{lemma:max-injection-rate}) be the transactions generated in $I_z$ up to $E_X$. Similar to the previous case 2.b, the transactions scheduled by $E_X$ in $I_z$ are committed by the middle of $I_{z+1}$. This proved property (i).\\

        \end{itemize}
        % \vspace{-3mm}
We continue to prove the property (ii) for Case 2. Let $\T_C$ be the number of transactions generated after event $E_X$ and before the end of interval $I_z$. 
From Lemma~\ref{lemma:max-injection-rate}, we have $\T_C \leq \zeta$, and by Lemma~\ref{lemma:time-required-for-scheduler-when-event-triggered}, processing of $\T_C$ requires at most $\frac{\tau}{4}$ time units. From the property (i) proved earlier, all of the transactions that are scheduled at the last event $E_X$ in $I_z$ are committed by at most the middle of the $I_{z+1}$. Thus, the scheduling event for the transactions $\T_C$ occurs by the middle of $I_{z+1}$, and requires at most $\frac{\tau}{4}$ time units to commit $\T_C$.
Moreover, when the scheduling event occurs in the middle of $I_{z+1}$, there may be new transactions generated in $I_{z+1}$ that are additional to $\T_C$; we denote this additional number of transactions as $\T_D$, where $\T_D \leq \zeta$ (from Lemma~\ref{lemma:max-injection-rate}). These transactions $\T_D$ also require at most $\frac{\tau}{4}$ time units to be committed (again by Lemma~\ref{lemma:time-required-for-scheduler-when-event-triggered}).
Therefore, the total schedule length required in $I_{z+1}$ is:
 $\frac{\tau}{2}$ for transactions scheduled at last event $E_X$,
$\frac{\tau}{4}$ for transactions $\T_C$, and
 $\frac{\tau}{4}$ for transactions $\T_D$.
Thus, the total time unit required is at most:
$
\frac{\tau}{2} + \frac{\tau}{4} + \frac{\tau}{4} = \tau.
$

 As the interval length $I_{z+1}$ is $\tau$, it is sufficient to process all the transactions generated in the interval $I_z$. This completes the inductive proof of property~(ii).
\end{itemize}%
\end{proof}

\begin{theorem}[Stability of {\sc EdSlScheduler}]
\label{theorem:single-leader}
    In Algorithm \ref{alg:centralized-optimized-scheduler}, for transaction generation rate 
 $\rho \leq \max\left\{ \frac{1}{16k}, \frac{1}{16\lceil \sqrt{s} \rceil} \right\}$ and burstiness $b \geq 1$, the number of combined pending transactions throughout the system at any given time is at most $2(2b+\rho\cdot 48\mathfrak{D})s$, and the transaction latency is at most $32b\cdot \min\{k, \sqrt{s}\} + 96\mathfrak{D}$.
\end{theorem}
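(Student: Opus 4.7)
The plan is to derive Theorem~\ref{theorem:single-leader} as a direct corollary of the three preceding lemmas, so the proof will mostly be a bookkeeping step rather than a new technical argument. First I would verify the hypothesis: since $\rho \leq \rho' = \max\{1/(16k), 1/(16\lceil\sqrt{s}\rceil)\}$, all three of Lemma~\ref{lemma:max-injection-rate}, Lemma~\ref{lemma:time-required-for-scheduler-when-event-triggered}, and Lemma~\ref{lemma:centralized-scheduler-transaction-generation} apply throughout the run, and time can be partitioned into the intervals $I_0, I_1, I_2, \ldots$ each of length $\tau = 16b\cdot \min\{k,\sqrt{s}\} + 48\mathfrak{D}$.

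Next I would bound the combined pending queue size. Fix an arbitrary time $t$, and let $I_z$ be the interval containing $t$. I claim that every transaction that is pending at time $t$ must have been generated during either $I_z$ itself or the immediately preceding interval $I_{z-1}$. Indeed, Lemma~\ref{lemma:centralized-scheduler-transaction-generation}(ii) guarantees that every transaction generated in an interval $I_j$ is committed or aborted by the end of $I_{j+1}$; applied with $j \leq z-2$, this rules out any surviving transaction from intervals older than $I_{z-1}$. Since Lemma~\ref{lemma:max-injection-rate} bounds the number of transactions generated in each of $I_{z-1}$ and $I_z$ by $\zeta = (2b+\rho\cdot 48\mathfrak{D})s$, the combined pending count at $t$ is at most $2\zeta = 2(2b+\rho\cdot 48\mathfrak{D})s$, as claimed.

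For the latency bound I would argue similarly but tracking a single transaction. Consider any transaction $T_i$ generated at some time $t \in I_z$. By Lemma~\ref{lemma:centralized-scheduler-transaction-generation}(ii), $T_i$ is committed or aborted by the end of $I_{z+1}$. The time from $t$ to the end of $I_{z+1}$ is at most the combined length of $I_z$ and $I_{z+1}$, i.e.\ $2\tau$. Substituting the definition of $\tau$ gives a latency bound of $2\tau = 32b\cdot \min\{k,\sqrt{s}\} + 96\mathfrak{D}$, matching the statement.

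The main obstacle, insofar as there is one, is the justification in the queue-size argument that no transaction older than $I_{z-1}$ can still be pending at time $t \in I_z$. This is exactly what Lemma~\ref{lemma:centralized-scheduler-transaction-generation}(ii) provides, but it is worth spelling out carefully because the event-driven nature of the scheduler means that the two-interval horizon (current plus one predecessor) is tight: property (ii) is what converts the per-interval injection bound $\zeta$ of Lemma~\ref{lemma:max-injection-rate} into a global pending-size bound, while property (i) is implicitly the ingredient that keeps property (ii) consistent across adjacent intervals. Once these dependencies are pointed out, the remaining arithmetic is simply plugging the definitions of $\tau$ and $\zeta$ into the stated expressions.
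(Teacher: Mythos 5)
Your proposal is correct and follows essentially the same route as the paper: both derive the queue bound by observing that at any time in $I_z$ only transactions from $I_{z-1}$ and $I_z$ can still be pending (each contributing at most $\zeta$ by Lemma~\ref{lemma:max-injection-rate}), and both obtain the latency bound $2\tau$ directly from Lemma~\ref{lemma:centralized-scheduler-transaction-generation}(ii). Your explicit invocation of property (ii) to rule out transactions older than $I_{z-1}$ is in fact a slightly more careful justification of the ``stale transactions'' count than the paper's wording, but it is the same argument.
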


\vspace{-2mm}
\begin{proof}
    To estimate the number of pending transactions during any time unit, consider a time unit within an interval $I_z$. 
    From the proof of Lemma \ref{lemma:max-injection-rate}, the maximum number of stale transactions during any time of $I_z$ is $\zeta \leq (2b+\rho\cdot 48\mathfrak{D})s$. Within $I_z$, there can be at most $(2b+\rho\cdot 48\mathfrak{D})s$ newly generated transactions. Therefore, the upper bound on pending transactions during any time unit is $2(2b+\rho\cdot 48\mathfrak{D})s$.
    %, where $2bs$ is from old transactions and $2bs$ is from new transactions.
    For estimating transaction latency, we rely on the fact that a transaction generated in an interval will be processed by the end of the next interval, see Lemma \ref{lemma:centralized-scheduler-transaction-generation} property (ii). 
    Consequently, the transaction latency is bounded by twice the duration of the maximum interval length $\tau$:

    \[
    2 \tau = 2 ( 16b\cdot \min \{k, \lceil \sqrt{s}\rceil\} + 48 \mathfrak{D}) = 32b\cdot \min \{k, \lceil \sqrt{s}\rceil\}+96\mathfrak{D} .
    \]%

\end{proof}

\begin{theorem}
   \label{theorem:alg1-safety-and-liveness}
       Our proposed scheduling Algorithm~\ref{alg:centralized-optimized-scheduler} provides safety and liveness.
   \end{theorem}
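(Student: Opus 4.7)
The plan is to split the argument into two parts, one for safety and one for liveness, and to reuse the structural results already proved for Algorithm~\ref{alg:centralized-optimized-scheduler} (in particular Lemmas~\ref{lemma:max-injection-rate}--\ref{lemma:centralized-scheduler-transaction-generation} for liveness, and the coloring-based conflict resolution for safety).

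\textbf{Safety.} Safety means that (a) no two conflicting transactions are ever committed in inconsistent orders at different destination shards, and (b) each transaction is either committed atomically across all its destination shards or aborted entirely. I would argue (a) by following a conflicting pair $T_i, T_j$ through the three phases of the algorithm. Because both transactions are appended to $PQ_{ldr}$, they appear as adjacent vertices in the conflict graph $G_{\T}$, so the greedy vertex coloring at the leader assigns them different colors $C_k \neq C_{k'}$; the leader pre-commits color groups in a fixed order, so one of the two, say $T_i$, is pre-committed strictly before $T_j$ at the leader. Since the leader writes the same total pre-commit order into every $PrecommitSubTxnBatch(\mathcal{V}, S_j)$ that it emits, and each destination shard appends its received batch atomically after running local consensus (PBFT, under the $n_i > 3f_i$ assumption), the same relative order of $T_i$ before $T_j$ is installed on every involved ledger. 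The \emph{occupied}/\emph{available} flag maintained by destination shards in the \emph{account\_state\_request} handler prevents a second leader request on the same account from being answered until the first has been finalized, which rules out stale reads across different versions $\mathcal{V}$. For (b), I would observe that pre-commit happens centrally at the leader on a consistent snapshot of all relevant accounts returned in \emph{account\_state\_response}, so the commit/abort decision for $T_i$ is made once and then broadcast identically to every destination shard via the cluster-sending protocol of Section~\ref{preliminaries}, which delivers the same value to all non-faulty recipients; hence $T_i$ is either appended in all destination ledgers (after local consensus per shard) or in none.

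\textbf{Liveness.} Liveness means that every transaction generated under the adversarial model eventually leaves $PQ_{ldr}$ and is either committed or aborted, and that the leader eventually removes it from $SQ_{ldr}$. I would bootstrap directly from Lemma~\ref{lemma:centralized-scheduler-transaction-generation}(ii): any transaction injected during interval $I_z$ is committed or aborted by the end of $I_{z+1}$, giving a finite latency bound of $2\tau$. The only ways liveness could still fail are (i) the scheduling event at the leader never fires, (ii) an \emph{account\_state\_response} never arrives, or (iii) a \emph{final\_commit\_response} never arrives. For (i), I would note that as soon as a new transaction enters $PQ_{ldr}$ that makes the recomputed schedule length $\lambda$ reach $LastEventSchLength$, the event fires; and $LastEventSchLength$ is decremented on every \emph{final\_commit\_response}, so it cannot remain permanently above $\lambda$. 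For (ii) and (iii), after the Global Stabilization Time each emitted message is delivered within $\mathfrak{D}$ by partial synchrony, and the only blocking point at a destination shard is the wait inside the \emph{account\_state\_request} handler until the account becomes \emph{available}; by an induction on the version number $\mathcal{V}$, any earlier batch holding the account will itself commit in bounded time (by the previous case together with Lemma~\ref{lemma:time-required-for-scheduler-when-event-triggered}), so the wait is finite.

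\textbf{Main obstacle.} The delicate step will be the liveness argument for case (ii): showing that the \emph{occupied} flag never causes a permanent stall. This requires an induction ordered by versions $\mathcal{V}$ together with the observation that the leader's coloring in $G_{\T^{\mathcal{V}}}$ never schedules two accesses to the same account within the same color group, so the request that finds an account \emph{occupied} must be waiting on a strictly earlier version whose commit/abort is guaranteed by the inductive hypothesis plus Lemma~\ref{lemma:centralized-scheduler-transaction-generation}. The safety side is comparatively routine once one has the coloring and cluster-sending facts in hand; the versioning scheme is the subtle piece that has to be invoked to rule out cross-event interleavings.
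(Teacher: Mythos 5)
Your proposal is correct and follows essentially the same route as the paper's own proof: safety comes from the greedy coloring of the conflict graph at the single leader (conflicting transactions receive distinct colors and are therefore serialized consistently in the batches sent to destination shards), and liveness comes from the absence of deadlock under a single leader together with the bounded latency established in Lemma~\ref{lemma:centralized-scheduler-transaction-generation} and Theorem~\ref{theorem:single-leader}. Your treatment of atomicity, the versioning scheme, and the \emph{occupied}/\emph{available} flag is in fact considerably more detailed than the paper's brief sketch, which leaves those points implicit.
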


    \begin{proof}
    We achieved safety by using a greedy vertex coloring algorithm within the leader shard, where the leader shard assigns a distinct color to each conflicting transaction. 
    In our protocol, each shard sends its transactions to the leader shard, which constructs a transaction conflict graph. 
    Then the leader assigns different colors to conflicting transactions and the same color to non-conflicting transactions using the greedy vertex coloring algorithm. After that, transactions are processed according to their assigned colors, which ensures that conflicting transactions commit at different time units.

    Liveness is guaranteed by ensuring that every generated transaction eventually commits. 
    There is no deadlock in Algorithm~\ref{alg:centralized-optimized-scheduler}, as only a single leader controls the shard without interference from other leaders. Moreover, from Theorems~\ref{theorem:single-leader}, we have a bounded transaction latency for the single-leader scheduler.
    This ensures that every generated transaction is either committed or aborted within a bounded number of time units.
   \end{proof}

% \vspace{-4mm}
\section{Event-Driven Scheduler with Multiple Leaders}
\label{sec:multiple-leaders}
% \vspace{-3mm}
%The scheduling algorithm presented earlier, in Section~\ref{sec:single-leader}, use a central authority in a shard (i.e., a leader) that gathers the knowledge about all current transactions and the maximum degree of the transaction graph. However, such 
A central authority, as considered in Section~\ref{sec:single-leader}, might not exist in some blockchain sharding since, in principle, each transaction could be generated independently in a distributed manner in any shard, and gathering knowledge about them could be time- and resource-consuming. Here, we discuss a fully distributed scheduling approach using a clustering technique that allows the transaction schedule to be computed in a decentralized manner without requiring a single central authority. Each cluster has its own leader, and within each cluster, the distributed approach invokes 
%a centralized scheduling 
the Event-Driven Single-Leader Scheduling
Algorithm~\ref{alg:centralized-optimized-scheduler} from Section~\ref{sec:single-leader}.%

% \vspace{-4mm}
\subsection{Shard Clustering}
\label{shard-clustering}
% \vspace{-2mm}
% We define the $z$-neighborhood of shard $S_i$ 
% as the set of shards within a distance of at most $z$ from $S_i$. Moreover, the 0-neighborhood of shard $S_i$ is just the $S_i$~itself.
Similarly to~\cite{adhikari2024spaastable}, we consider a multi-layer clustering~\cite{gupta2006oblivious} of shard graph $G_s$, which is calculated before the algorithm starts and known to all the shards. This clustering scheme has been previously used to optimize transaction scheduling and execution
~\cite{sharma2014distributed,busch2022dynamic,adhikari2024spaastable,adhikari2024fast}.
A key advantage of this approach is locality-awareness, where transactions accessing nearby shards are prioritized over those that involve more distant shards~\cite{adhikari2024spaastable,adhikari2024fast}.
The cluster is formed by dividing the shard graph $G_s$ into $H_1 = \lceil \log D \rceil +1$ layers of clusters (logarithms are base 2, and $D$ is the diameter of $G_s$). Each layer consists of multiple clusters, and each cluster contains a set of shards.
Layer $i$, where $0\leq i < H_1$, is a sparse cover of $G_s$, which has the following properties: 
(i) Each cluster at layer 
$i$ has a diameter of at most $O(2^i\log s)$;
(ii) Each shard belongs to at most
 $O(\log s)$ different clusters within layer $i$; 
(iii) For $i \geq 1$, for every shard $S_k$, at least one cluster at layer 
$i$ contains the entire $2^{i-1}$-neighborhood of $S_k$.

The layer 0 is a special case that has each shard as a cluster on its own.
Each layer $i \geq 1$ is further subdivided into $H_2 = O(\log s)$ partitions, following the sparse cover construction in~\cite{gupta2006oblivious}. 
These partitions, referred to as {\em sub-layers}, are labeled from $0$ to $H_2-1$.
A shard may be present in all 
$H_2$ sub-layers, ensuring that at least one of these clusters contains its complete $2^{i-1}$-neighborhood.

Within each cluster at layer $i \geq 1$, a leader shard is designated such that its $2^{i-1}$-neighborhood is entirely contained within the cluster. To systematically identify clusters across layers and sub-layers, we introduce the concept of height, represented as a tuple $(i, j)$, where $i$ refers to the layer and $j$ refers to the sublayer. Following~\cite{sharma2014distributed,busch2022dynamic,adhikari2024spaastable}, cluster heights are lexicographically ordered to maintain a structured hierarchy.
For a cluster $C$ let $height(C) = (i, j)$ (where $0 \leq i < H_1$ represents the layer, and $0 \leq j < H_2$ represents the sublayer),
denote its height.

The {\em home cluster} of a transaction $T_i$ is determined based on its home shard 
$S_{i}$ and the destination shards it needs to access.
Let $z$ be the maximum distance between $S_i$ and the destination shards that transaction $T_i$ accesses.
The home cluster of $S_i$ is defined as the lowest-layer and sub-layer (with lowest lexicographic order) cluster that contains the $z$-neighborhood of $S_i$.
Each home cluster has a designated leader shard responsible for handling all transactions originating from shards within that cluster. When a transaction 
$T_i$ is generated at its home shard, the transaction is sent to the leader shard of the corresponding home cluster, where the scheduling process is determined.
% \vspace{-4mm}
\subsection{Multi-leader Scheduling Algorithm} 
% \vspace{-2mm}
% \sss{The pseudocode of the Multi-leader Scheduling Algorithm~\ref{alg:multiple-leaders-scheduler} is deferred to  Appendix~\ref{appendix:pseudocode-of-event-driven-multi-leader-scheduler}.}
%\noindent {\bf High-level Idea:} 
% When new transactions are generated in any home shard, the home shard determines the home cluster for each transaction based on the distance between the home shard and its accessed objects (destination shards). The transaction information is then sent to the leader of the respective home~cluster.  
%Each cluster $C$ belongs to a layer $i$ and a sublayer $j$ (denoted as height $(i,j)$), where $0 \leq i < H_1$ and $0 \leq j < H_2$. 
We provide the pseudocode of the Multi-leader scheduler in Algorithm~\ref{alg:multiple-leaders-scheduler}.
Consider a cluster $C$ with $height(C) = (i, j)$.
Within cluster $C$ we use Algorithm~\ref{alg:centralized-optimized-scheduler} to schedule and commit transactions. The leader $S_{ldr}$ of $C$ first appends any transactions it receives into its pending queue $PQ_{ldr}$. 
If the schedule length $\lambda$ of  $PQ_{ldr}$ transactions is at least the previously computed schedule length, denoted as $LastEventSchLength$, then the leader schedules and processes those transactions. To schedule transactions,
$S_{ldr}$ first acquires $scheduleControl$. Once it has $scheduleControl$ it moves transactions from $PQ_{ldr}$ to $SQ_{ldr}$ and processes those transactions similar to Algorithm~\ref{alg:centralized-optimized-scheduler}.

The acquisition of $scheduleControl$ is coordinated with 
``parent'' and ``child'' clusters.
For a cluster $C$ with $height(C) = (i,j)$ let $(i',j')$ and $(i'',j'')$ denote its immediate higher and lower heights,
namely, $(i',j') > (i,j) > (i'',j'')$
(assume that $(i,j)$ is not the top-most or bottom-most height,
as those can be treated as simpler special cases). 
%. has immediate parent and child clusters at the immediate respective higher and lower heights.
A {\em parent} cluster of $C$ is any cluster $C'$ at the immediate above height $(i',j')$ that has common shards with $C$.
A {\em child} cluster of $C$ is any cluster $C''$ at the immediate below height $(i'',j'')$ that has common shards with~$C$.
Note that $C$ may have multiple parents or children.

The communication between $C$ and its parent/child clusters occurs through event-driven requests and responses between their leaders.
If $C$ is at the bottom-most height, $(i,j) = (0,0)$,
then $C$ has by default the scheduling control, namely, $scheduleControl(C) = true$.
However, if $C$ is not at the bottom-most height
then it initially does not have the scheduling control 
and it must acquire it from its children.
The \textit{schedule\_control\_request} event is used to request scheduling control from $C$ to all children clusters.
Once all the children have returned a positive \textit{schedule\_control\_response},
then $scheduleControl(C) = true$.
When this happens, the leader $S_{ldr}$ of $C$
proceeds with executing locally Algorithm~\ref{alg:centralized-optimized-scheduler}.
After that, $C$ relinquishes its control back to its children
by sending \textit{schedule\_control\_release}
which pass it down to their own children until it reaches the bottom-most height.

Each time when the leader shard $S_{ldr}$ of $C$ has pending transactions $PQ_{ldr}$ the schedule length $\lambda$ reaches the minimum $LastEventSchLength$,
then it requests the scheduling control from its children.
However, $C$ may request to obtain control on behalf of a parent.
If any of its parent $C'$ requests the scheduling control from $C$,
then if $C$ currently has the scheduling control, it responds positively to $C'$.
Otherwise, $C$ forwards the control acquisition request down to its children.
When all the children respond positively, 
then it passes the control to the parent $C'$.
While trying to obtain control for its parent,
$C$ may use the control to process its own transactions if
new transactions have arrived in leader $S_{ldr}$,
before passing it to the parent $C'$.

If cluster $C$ requests control acquisition for processing its own new transactions,
then it only needs to involve the children
which intersect with the destination shards of 
the transactions in the queue $PQ_{ldr}$ of its leader $S_{ldr}$.
This assures that there will be no interference between leaders that have no common intersecting destination shards in the transactions of their pending queues.
When a control acquisition request is forwarded by $C$ to children, we include in the request the involved destination shards,
in order to pass the request only to the children clusters that contain such destination shards.

\begin{algorithm*}
\smaller[1]
\caption{\sc EdmlScheduler}
\label{alg:multiple-leaders-scheduler}

Assume a hierarchical cluster decomposition shard graph $G_s$ is precomputed before the algorithm starts and known to all the shards\;
Each cluster $C$ belongs at some height $(i,j)$, at layer $i$ and 
sublayer $j$ of the hierarchical decomposition, where $0 \leq i < H_1$ and $0 \leq j < H_2$; the heights $(i,j)$ are ordered lexicographically\;
% $C(i,j):$ cluster at level $(i,j)$\;
% $C(i',j'):$ immediate cluster above cluster $C(i,j)$, we called it as {\em parent cluster} of $C(i,j)$\;
% $C(i'', j''):$ immediate  cluster below cluster $C(i,j)$, we called it as {\em child cluster} of $C(i,j)$\;
% $scheduleControl:$ Schedule control, initially at lowest level (0,0).\;
\tcp{notation for schedule\_control\_request and schedule\_control\_response}
schedule\_control\_request$(C(i',j'), C(i,j), accessingShards):$ $scheduleControl$ for $accessingShards$ request from cluster $C(i',j')$ to cluster $C(i,j)$\;
schedule\_control\_response$(C(i,j), C(i',j'), accessingShards):$ $scheduleControl$ for $accessingShards$ response from cluster $C(i,j)$ to cluster $C(i',j')$\;

% % $PQ_{ldr}(C(i,j)):$ Pending transaction queue at leader shard of cluster $C(i,j)$\;
% $\lambda:$  Schedule length of cluster $C(i,j)$\;
% $LastEventSchLength:$ Last recorded schedule length of cluster $C(i,j)$\;

\BlankLine
\BlankLine
\tcc{Event on Home Shard}
\SetKwBlock{EventOne}{\bf Upon event: {\em new transaction $T_i$ generated at home shard $S_i$}}{}
\EventOne{
    Identify home cluster $C(i,j)$ containing $S_i$ and destination shards accessed by $T_i$\;
    Shard $S_i$ {\em \textbf{Emit} transaction\_send}($T_i$) to cluster leader shard $S_{ldr}$ of  $C(i,j)$\;
}

\BlankLine
\BlankLine
% \textcolor{blue}{\tcc{Event on leader shard at cluster $C(i,j)$}}
\tcc{Event: Leader Shard Receives Transaction or Scheduling Control}

\SetKwBlock{EventTwo}{\bf Upon event: {\em transaction\_send}($T_i$) received at leader shard $S_{ldr}$ at cluster $C(i,j)$ OR {\em schedule\_control\_response($C(i',j'), C(i,j), accessingShards$)} OR {\em schedule\_control\_response($C(i'',j''), C(i,j), accessingShards$)}}{}

\EventTwo{
    Append $T_i$ (if any) to the pending queue $PQ_{ldr}$ of cluster leader shard at $C(i,j)$\;
    % Compute schedule length $\lambda$ for transactions in $PQ_{ldr}$\;
     Constructs transaction conflict graph and colors it using greedy vertex coloring to determine schedule length $\lambda$ for transactions in $PQ_{ldr}$\;
    
    \If{$\lambda \geq LastEventSchLength$}{
        \If{$scheduleControl$ is at cluster $(i,j)$}{
        Move transactions  from $PQ_{ldr}$ to schedule queue of leader shard $SQ_{ldr}$ at cluster $C(i,j)$\;
            Process all transactions $\T'$ of $SQ_{ldr}$ using Algorithm~\ref{alg:centralized-optimized-scheduler}\;
           Wait until all transactions $\T'$ are committed\;
            Release $scheduleControl$ to shards access by $\T'$\;
            % \If{$scheduleControl$ is requested by parent cluster $C(i', j')$}{
            
            % Set $scheduleControl$ to each $accessingShards$\;
            %     {\em \textbf{Emit} schedule\_control\_response}($C(i,j), C(i',j'), accessingShards$)\;
            % }
            % \Else{
                \tcc{Send $scheduleControl$ back to child clusters}
            \ForEach{child cluster $C(i'', j'')$ of $C(i,j)$}{
                    {\em \textbf{Emit} schedule\_control\_response}($C(i,j), C(i'',j''), accessingShards$)\;
                }
            % }
        }
        \Else{
            \tcc{Request control from all child clusters simultaneously}
            \ForEach{child cluster $(i'', j'')$ of $C(i,j)$}{
                {\em \textbf{Emit} schedule\_control\_request}($C(i,j), C(i'',j''), accessingShards$)\;
            }
            % Wait for responses from all child clusters\;
        }
    }
    \Else{
        %    \If{$scheduleControl$ is requested by parent cluster $C(i', j')$}{
        %     {\em \textbf{Emit} schedule\_control\_response}($C(i,j), C(i',j'), accessingShards$)\;
        % }
        % \Else{
          \tcc{Send $scheduleControl$ back to child clusters}
            \ForEach{child cluster $(i'', j'')$ of $C(i,j)$}{
                {\em \textbf{Emit} schedule\_control\_response}($C(i,j), C(i'',j''), accessingShards$)\;
            }
        % }
    }
}
\BlankLine
\BlankLine
\tcc{Leader shard at cluster $C(i,j)$ receive the message from immediate parent cluster $C(i',j')$ }
\SetKwBlock{EventThree}{\bf Upon event:  {\em  schedule\_control\_request}($C(i',j'), C(i,j), accessingShards$)}{}
\EventThree{
   \If{$scheduleControl$ is in $C(i,j)$ for all $accessingShards$}{
         \If{All transactions $\T'$ at leader of $C(i,j)$ are committed}{
         Release $scheduleControl$ to shards access by $\T'$\;
          Set $scheduleControl$ to each $accessingShards$\;
            {\em \textbf{Emit} schedule\_control\_response}($C(i,j), C(i',j'), accessingShards$)\;
        }
        \Else{
            Wait until all scheduled transactions  $\T'$ are committed\;
            Release $scheduleControl$ to shards access by $\T'$\;
            Set $scheduleControl$ to each $accessingShards$\;
            {\em \textbf{Emit} schedule\_control\_response}($C(i,j), C(i',j'), accessingShards$)\;
        }
   }
   \Else{
    \tcc{Request schedule control to all child clusters recursively}
        \ForEach{child cluster $(i'', j'')$ of $C(i,j)$}{
            {\em \textbf{Emit} schedule\_control\_request}($C(i,j), C(i'',j''), accessingShards$)\;
        }
    }
}

\end{algorithm*}

 \subsection{Analysis for Multi-leader Scheduler}
% \vspace{-2mm}
The multi-leader scheduler extends the single-leader scheduling algorithm (Algorithm~\ref{alg:centralized-optimized-scheduler}) while introducing an additional overhead cost due to its hierarchical clustering structure. This overhead is 
$O(\log D \log s)$, where 
$O(\log D)$ accounts for the layered hierarchy and 
$O(\log s)$ comes from the sublayers. Here, 
$D$ represents the diameter of shard graph $G_s$, and 
$s$ denotes the total number of shards. 
Moreover, each cluster $C$ at level $(i,j)$ has its own diameter $\mathfrak{D}_i \leq \mathfrak{D}$, where $\mathfrak{D}$ is the maximum diameter among all clusters, which represents the upper bound on local processing time and communication delay between any two shards. This upper bound $\mathfrak{D}$ is defined in Section~\ref{preliminaries}.
% Despite this added cost, the multi-leader approach improves scalability and optimizes distributed transaction scheduling.

% \new{
% For the sake of analysis of the multi-leader scheduling algorithm, we divide time into a sequence of intervals $I_1, I_2, \ldots,$ each of length 
% $\tau' \leq \tau \cdot c_1 \log D \log s = 16c_1b \log D \log s \cdot \min\{k, \sqrt{s}\} + 48 c_1 \mathfrak{D} \log D \log s$, where where $c_1$ is some positive constant.
% To simplify the analysis, we also have an artificial interval $I_0$
% before time begins with no transactions.
% }

For the sake of analysis of the multi-leader scheduling algorithm, we divide time into a sequence of intervals $I_1, I_2, \ldots,$ each of duration $\tau'$ as defined below.
%\[
%$\tau := 16b\cdot \min\{k, \sqrt{s}\} + 48\mathfrak{D}$.
%\ .
%\]
To simplify the analysis, we also have an artificial interval $I_0$ before time begins with no transactions. Let the maximum transaction injection rate be $\rho''$, as defined below, where $c_1$ is some combined positive constant that comes from layers and sublayers,
%$\rho' \leq \max\left\{ \frac{1}{16k}, \frac{1}{16\lceil \sqrt{s} \rceil} \right\}$ 
and let $\zeta'$ denote the maximum number of transactions that can be generated in an interval:
\begin{eqnarray*}
    \tau' & := & 16c_1b \log D \log s \cdot \min\{k, \sqrt{s}\} + 48 c_1 \mathfrak{D} \log D \log s \ ,\\ 
    \rho''& := & \frac{1}{16c_1 \log D \log s}\max\left\{ \frac{1}{k}, \frac{1}{\lceil \sqrt{s} \rceil} \right\} \ ,
    \zeta' :=  (2b+ \rho 48 c_1 \mathfrak{D} \log D \log s)s \ .
\end{eqnarray*}

\begin{lemma}
\label{lemma:max-injection-rate-for-multi-leader}
In Algorithm \ref{alg:multiple-leaders-scheduler},
for transaction generation rate
$\rho \leq \rho''$
and $b\geq 1$,
in any interval $I_z$ of length $\tau'$, where $z \geq 1$, there are at most $\zeta'$ new transactions generated within~$I_z$.
\end{lemma}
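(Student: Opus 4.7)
\medskip

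\noindent\textbf{Proof proposal.} The plan is to mirror the argument used for Lemma~\ref{lemma:max-injection-rate} in the single-leader setting, simply replacing the interval length $\tau$ with the new interval length $\tau'$ tailored to the multi-leader scheduler. The burstiness parameter $b$ and the adversary's shard-congestion bound carry over unchanged because the adversarial transaction-injection model from Section~\ref{subsec:adversarial-model} is agnostic to which scheduler is used; it only bounds the number of transactions accessing any single shard over any time window of a given length.

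First, I would invoke the leaky-bucket adversarial constraint directly: over any contiguous interval $I_z$ of length $\tau'$, the total congestion added to any individual shard is at most $\rho \cdot \tau' + b$. Plugging in the definition of $\tau'$, this expression equals
\[
\rho\cdot 16c_1 b \log D \log s\cdot \min\{k,\sqrt{s}\} + \rho\cdot 48 c_1 \mathfrak{D} \log D \log s + b \ .
\]
The key algebraic step is to absorb the first summand into an additional $+b$ term using the hypothesis $\rho \leq \rho''$. Since $\rho'' = \frac{1}{16 c_1 \log D \log s\cdot \min\{k,\lceil\sqrt{s}\rceil\}}$ and $\min\{k,\sqrt{s}\} \leq \min\{k,\lceil\sqrt{s}\rceil\}$, we obtain
\[
\rho\cdot 16 c_1 \log D \log s \cdot \min\{k,\sqrt{s}\} \leq 1 \ ,
\]
so the first summand is at most $b$. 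Therefore, the per-shard congestion added during $I_z$ is bounded by $2b + \rho\cdot 48 c_1 \mathfrak{D} \log D \log s$, which is exactly the per-shard analogue of the expression defining $\zeta'$.

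Finally, I would finish in the same way as in Lemma~\ref{lemma:max-injection-rate}: each injected transaction contributes at least one unit of congestion to some shard (its home shard, or any destination shard it accesses), and there are only $s$ shards in total. Summing the per-shard bound over all shards yields an upper bound of $(2b + \rho\cdot 48 c_1 \mathfrak{D} \log D \log s)\,s = \zeta'$ on the number of transactions generated during $I_z$, which proves the claim. I do not anticipate any obstacle beyond the small algebraic check involving $\min\{k,\sqrt{s}\}$ versus $\min\{k,\lceil\sqrt{s}\rceil\}$; the rest follows from the adversarial model and the choice of $\tau'$, $\rho''$, and $\zeta'$ in Section~\ref{sec:multiple-leaders}.
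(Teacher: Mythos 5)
Your proposal is correct and follows essentially the same route as the paper's proof: apply the leaky-bucket bound $\rho\cdot\tau'+b$ per shard, use $\rho\le\rho''$ to absorb the $\rho\cdot 16c_1 b\log D\log s\cdot\min\{k,\sqrt{s}\}$ term into an extra $b$, and multiply the resulting per-shard bound by $s$. You merely make explicit the algebraic step (the $\min\{k,\sqrt{s}\}$ versus $\min\{k,\lceil\sqrt{s}\rceil\}$ comparison) that the paper leaves implicit in its first inequality.
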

% \vspace{-3mm}
\begin{proof}
    According to the definition of the adversary,
    during any interval $I_z$ of length $\tau'$, the maximum congestion added to any shard is:
    % \vspace{-1.5mm}
 \begin{equation*}
 % \vspace{-1.5mm}
    \label{eqn:transaction-generation-rate-alg2}
    \rho \cdot \tau' + b \leq b + \rho \cdot 48 c_1 \mathfrak{D} \log D \log s +b = 2b+ \rho\cdot 48 c_1 \mathfrak{D} \log D \log s
    \ 
    % \vspace{-3mm}
    \end{equation*}

\noindent
        Since each transaction accesses at least one shard and there are at most $s$ shards, 
        %Thus, 
        the total number of new transactions generated in  $I_z$~is~at~most
$     (2b+ \rho\cdot 48 c_1 \mathfrak{D} \log D \log s)s = \zeta'$.
%     \ .
    %This proves Lemma~\ref{lemma:max-injection-rate}.
\end{proof}

% \vspace{-3mm}
\begin{lemma}
\label{lemma:time-required-for-multi-leader-scheduler-when-event-triggered}
In Algorithm~\ref{alg:multiple-leaders-scheduler},
for transaction generation rate $\rho \leq \rho''$
and $b\geq 1$
in any interval $I_z$ of length $\tau'$ (where $z \geq 1$), if there is only one cluster $C$ and there are at most $\zeta'$ pending transactions, then it takes at most $\frac{\tau'}{4c_1 \log D \log s}$ time units to commit all $\zeta'$ transactions.
\end{lemma}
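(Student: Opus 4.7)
The plan is to mirror the argument of Lemma~\ref{lemma:time-required-for-scheduler-when-event-triggered}, since the hypothesis restricts attention to a single cluster $C$, inside which the leader executes Algorithm~\ref{alg:centralized-optimized-scheduler} locally. The only substantive change is that the per-shard congestion is now bounded by $2b + \rho\cdot 48 c_1 \mathfrak{D}\log D\log s$ (by Lemma~\ref{lemma:max-injection-rate-for-multi-leader}), and the constraint $\rho\leq \rho''$ yields $\rho\cdot 48 c_1 \mathfrak{D}\log D\log s \leq 3\mathfrak{D}/k$ when $k\leq\lceil\sqrt{s}\rceil$ and $\rho\cdot 48 c_1 \mathfrak{D}\log D\log s \leq 3\mathfrak{D}/\lceil\sqrt{s}\rceil$ otherwise. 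Thus the whole analysis parallels the single-leader one, but with these new per-shard congestion and rate bounds.

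First I would split into two cases as in Lemma~\ref{lemma:time-required-for-scheduler-when-event-triggered}. In Case 1 ($k\leq\lceil\sqrt{s}\rceil$), bound the maximum degree of the transaction conflict graph $G_\T$ by $\Delta \leq (2b + \rho\cdot 48 c_1 \mathfrak{D}\log D\log s - 1)\,k$, so greedy vertex coloring uses at most $\Delta+1$ colors. In Case 2 ($k>\lceil\sqrt{s}\rceil$), partition the $\zeta'$ pending transactions into heavy ones (accessing more than $\lceil\sqrt{s}\rceil$ shards) and light ones; the number of heavy transactions is at most $(2b+\rho\cdot 48 c_1\mathfrak{D}\log D\log s)\lceil\sqrt{s}\rceil$ by the congestion argument, each of which can be given a private color; the induced conflict subgraph on the light transactions has degree at most $(2b+\rho\cdot 48 c_1\mathfrak{D}\log D\log s-1)\lceil\sqrt{s}\rceil$, needing one more color for its greedy coloring, giving a total $\Gamma \leq (4b+2\rho\cdot 48 c_1\mathfrak{D}\log D\log s-1)\lceil\sqrt{s}\rceil+1$.

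Second, I would sum up the per-phase time of Algorithm~\ref{alg:centralized-optimized-scheduler} inside $C$: home-to-leader forwarding ($\mathfrak{D}$), graph construction and coloring ($1$), account state request and response ($2\mathfrak{D}$), serial pre-commit over the color classes ($\Delta+1$ or $\Gamma$), batch dispatch ($\mathfrak{D}$), and destination-shard consensus ($1$). Substituting the degree/color bounds and applying $\rho\cdot 48 c_1\mathfrak{D}\log D\log s \leq 3\mathfrak{D}/k$ or $3\mathfrak{D}/\lceil\sqrt{s}\rceil$ respectively, the telescoped arithmetic collapses to at most $2bk+9\mathfrak{D}$ in Case 1 and at most $4b\lceil\sqrt{s}\rceil+12\mathfrak{D}$ in Case 2, exactly as in the single-leader lemma. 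Both are bounded by $4b\cdot\min\{k,\lceil\sqrt{s}\rceil\}+12\mathfrak{D}$, which by the definition of $\tau'$ equals $\tau'/(4 c_1\log D\log s)$, yielding the claim.

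The main obstacle is purely bookkeeping: to verify that the rescaling of the injection rate by the factor $1/(c_1\log D\log s)$ in $\rho''$ exactly absorbs the inflated congestion bound $2b+\rho\cdot 48 c_1\mathfrak{D}\log D\log s$, so that the $\rho\cdot(\cdot)$ contribution in the coloring bound is still at most $3\mathfrak{D}/k$ (resp.\ $3\mathfrak{D}/\lceil\sqrt{s}\rceil$) and the final sum collapses to the clean target $\tau'/(4 c_1\log D\log s)$. No new algorithmic insight is required at this stage, since the hierarchical overhead due to multi-cluster coordination is deferred to the subsequent lemmas that compose the single-cluster bound across the $O(\log D\log s)$ layers and sublayers.
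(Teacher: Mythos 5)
Your proposal is correct and follows essentially the same route as the paper's own proof: the identical two-case split on $k$ versus $\lceil\sqrt{s}\rceil$, the same degree/heavy-light coloring bounds with the congestion term $2b+\rho\cdot 48 c_1\mathfrak{D}\log D\log s$, the same per-phase time accounting summing to $2bk+9\mathfrak{D}$ and $4b\lceil\sqrt{s}\rceil+12\mathfrak{D}$, and the same absorption of the rate term via $\rho''$. Nothing substantive is missing.
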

% \vspace{-2mm}

\begin{proof}
  We calculate the time required to process $\zeta'$ transactions 
             based on the magnitude~of~$k$~and~$\lceil\sqrt{s}\rceil$.
             \begin{itemize}[leftmargin=*]
                 \item {\bf Case 1:  $k \leq \lceil \sqrt{s} \rceil$.}\\
        %We have $\T_A$ transactions that need to be process and
    From Equation \ref{eqn:transaction-generation-rate-alg2} in the proof of Lemma~\ref{lemma:max-injection-rate-for-multi-leader}, the congestion of transactions per shard is $2b+\rho\cdot 48 c_1 \mathfrak{D} \log D \log s$.  
    Since each transaction accesses at most $k$ shards, each 
    pending
    transaction conflicts with at most $((2b+\rho\cdot 48 c_1 \mathfrak{D} \log D \log s)-1)k$ other
    pending
    transactions. Consequently, the highest degree $\Delta$ in the transactions conflict graph $G_\T$ 
    is $\Delta \leq ((2b+\rho\cdot 48 c_1 \mathfrak{D} \log D \log s)-1)k$. Hence, a greedy vertex coloring algorithm on $G_\T$ assigns at most $\Delta+1 \leq ((2b+\rho\cdot 48 c_1 \mathfrak{D} \log D \log s) -1)k+1$ colors. Since $k \geq 1$, the total length of the required time interval to schedule and commit/abort the transactions is calculated as follows. Assuming schedule control is currently in the cluster $C$ and the algorithm consists of different steps, the initial step is to send the information about the transaction from the home shard to the leader shard of cluster $C$, which takes at most $\mathfrak{D}$ time units.
    %to send the transaction information to leader shard. 
    Similarly, the leader shard takes $1$ time unit for graph construction and coloring (as we consider local consensus time $\delta_{cons}=1$).
    Moreover, the leader asks the destination shards for the states of their accounts, 
    %to destination shards, 
    and the destination shards respond with the states of their accounts. This combined takes $2\mathfrak{D}$ time units. Additionally, leader shards take $\Delta+1$ time units to pre-commit each conflicting transaction serially, and $\mathfrak{D}$ time units to send the committed subtransaction batch to the destination shard. Finally, the destination shard takes $1$ time unit to reach a consensus on the received order of subtransactions batch and appends on their local blockchain. Since $\rho \leq \rho''$, we have $\rho\cdot 48 c_1 \mathfrak{D} \log D \log s \leq \frac{3\mathfrak{D}}{k}$.
    Hence, the total combined required time length can be calculated~as

   \begin{eqnarray*}
         \mathfrak{D}\ + &2\mathfrak{D}&+\ 1 + (\Delta+1) + \mathfrak{D} +1 
         \\
         &\leq& 4\mathfrak{D} + 2 + 2bk +\frac{3\mathfrak{D}}{k}k -k +1 \\ 
        &\leq&  2bk + 9\mathfrak{D} 
        \\
        &\leq& \frac{\tau'}{4c_1 \log D \log s}\ .
    \end{eqnarray*}

\item   {\bf Case 2: $k> \lceil \sqrt{s}\rceil$.}\\
     We classify the $\zeta'$ 
     %\dk{stale} 
     pending transactions into two groups, the ``heavy'' transactions, which access more than $\lceil \sqrt{s}\rceil$ shards, and ``light'' transactions, which access at most $\lceil \sqrt{s}\rceil$ shards.
       The maximum number of heavy 
       %$\T_A$ 
       transactions can be $(2b+\rho\cdot 48 c_1 \mathfrak{D} \log D \log s)\lceil \sqrt{s}\rceil$. If there were more, the total congestion 
       %of $\zeta$ transactions 
       would be strictly greater than $(2b+\rho\cdot 48 c_1 \mathfrak{D} \log D \log s)\lceil \sqrt{s}\rceil \cdot \sqrt{s} \geq (2b+\rho\cdot 48 c_1 \mathfrak{D} \log D \log s)s$, which is not possible 
       % \dk{??? because ???}
       \ra{because of Lemma~\ref{lemma:max-injection-rate-for-multi-leader}}.
    A coloring algorithm for the conflict graph $G_\T$ can assign each of the heavy transactions a unique color, which can be done using at most $\Gamma_1 =(2b+\rho\cdot 48 c_1 \mathfrak{D} \log D \log s)\lceil \sqrt{s}\rceil$~colors.\\
        
 %\noindent    
    The remaining transactions are light. Let $G'_\T$ be the subgraph of the conflict graph $G_\T$ 
    %with 
   induced by
    the light transactions. Each light transaction conflicts with at most $((2b+\rho\cdot 48 c_1 \mathfrak{D} \log D \log s)-1)\lceil \sqrt{s}\rceil$
     other light transactions.
     Hence, 
     the degree of $G'_\T$ is at most $(2b+\rho\cdot 48 c_1 \mathfrak{D} \log D \log s-1)\lceil \sqrt{s}\rceil$.
     Thus, $G'_\T$ can be colored with at most $\Gamma_2 = (2b+\rho\cdot 48 c_1 \mathfrak{D} \log D \log s-1)\lceil \sqrt{s}\rceil + 1$~colors.

    Consequently, the conflict graph $G_\T$ can be colored with at most $\Gamma = \Gamma_1+\Gamma_2 = (2b+\rho\cdot 48 c_1 \mathfrak{D} \log D \log s)\lceil \sqrt{s}\rceil + (2b+\rho\cdot 48 c_1 \mathfrak{D} \log D \log s-1)\lceil \sqrt{s}\rceil + 1$ colors,
    which implies $\Gamma \leq (4b+ 2 \rho\cdot 48 c_1 \mathfrak{D} \log D \log s -1)\lceil \sqrt{s} \rceil + 1$. Since $s\geq 1$ and %\ra{we can replace $\rho \cdot 48\mathfrak{D}$ with $\frac{3\mathfrak{D}}{\lceil \sqrt{s}\rceil}$ as } 
    $\rho\cdot 48 c_1 \mathfrak{D} \log D \log s \leq \frac{3\mathfrak{D}}{\lceil \sqrt{s}\rceil}$, the 
    %length of 
 number of time units needed within the interval
    $I_{z}$ is at most:

     \begin{eqnarray*}
          \mathfrak{D}\ + &2\mathfrak{D}&+\ 1 + \Gamma + \mathfrak{D} +1 
          \\
    &=& 4\mathfrak{D} +2+  \left(4b  + 2 \frac{3\mathfrak{D}}{\lceil \sqrt{s}\rceil} - 1\right)\lceil \sqrt{s}\rceil +1\\   
        &\leq&  4b \lceil \sqrt{s}\rceil + 12\mathfrak{D}  
        \\
        &\leq& \frac{\tau'}{4c_1 \log D \log s} \ .
    \end{eqnarray*}

\end{itemize}
\end{proof}

% \vspace{-1mm}
\begin{lemma}
\label{lemma:time-required-for-multi-leader-scheduler-when-event-triggered-for-all-cluster}
In Algorithm~\ref{alg:multiple-leaders-scheduler},
for transaction generation rate $\rho \leq \rho''$
and $b\geq 1$
in any interval $I_z$ of length $\tau'$ (where $z \geq 1$), if there are at most $\zeta'$ pending transactions considering all layers and sublayers, then it takes at most $\frac{\tau'}{4}$ time units to commit all $\zeta'$ transactions.
\end{lemma}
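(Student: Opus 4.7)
The plan is to leverage the single-cluster bound from Lemma~\ref{lemma:time-required-for-multi-leader-scheduler-when-event-triggered} and pay only a multiplicative overhead equal to the total number of cluster heights in the hierarchy. Recall from Section~\ref{shard-clustering} that the decomposition uses $H_1 = \lceil \log D \rceil + 1$ layers and $H_2 = O(\log s)$ sublayers, so the total number of heights in the lexicographic ordering is bounded by $c_1 \log D \log s$ for a suitable constant $c_1$ (the same $c_1$ used in the definitions of $\tau'$, $\rho''$, and $\zeta'$).

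First, I would observe that the $\zeta'$ pending transactions are distributed among clusters across the various heights, with each transaction assigned to exactly one home cluster (the lowest-height cluster containing its home shard together with its destination shards). Fix any cluster $C$ that holds pending transactions. Since the total congestion contributed to any single shard during $I_z$ remains at most $2b + \rho \cdot 48 c_1 \mathfrak{D} \log D \log s$ (as derived in the proof of Lemma~\ref{lemma:max-injection-rate-for-multi-leader}), the same coloring argument used in Lemma~\ref{lemma:time-required-for-multi-leader-scheduler-when-event-triggered} applies locally in $C$: the schedule length incurred when $C$ holds $scheduleControl$ and runs Algorithm~\ref{alg:centralized-optimized-scheduler} is bounded by $\frac{\tau'}{4c_1 \log D \log s}$ time units, regardless of how many transactions $C$ actually holds (up to $\zeta'$).

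Next, I would account for the control-passing overhead across the hierarchy. The key structural fact is that two clusters at the same height $(i,j)$ that share no common shards may simultaneously hold $scheduleControl$ for their respective shards, and thus their local Algorithm~\ref{alg:centralized-optimized-scheduler} executions proceed in parallel; conversely, clusters that do share shards serialize through the $scheduleControl$ acquisition/release mechanism, and such serialized chains have length at most $c_1 \log D \log s$ because a shard belongs to at most $O(\log s)$ clusters per layer across at most $O(\log D)$ layers. Hence the worst-case serial dependency chain, during which any particular shard is blocked while some sequence of clusters covering it obtains and releases control, has length at most $c_1 \log D \log s$ stages, and each stage costs at most $\frac{\tau'}{4c_1 \log D \log s}$ time units by the previous lemma. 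Multiplying gives the desired bound of $\frac{\tau'}{4}$.

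The main obstacle I expect is formalizing the claim that the serialization chain across heights has length at most $c_1 \log D \log s$: one must argue carefully that the \textit{schedule\_control\_request} / \textit{schedule\_control\_response} protocol, which propagates down to children and back, does not create longer serial chains than the height count suggests, and that requests restricted to $accessingShards$ do not collide with unrelated clusters. I would handle this by observing that each control acquisition only involves clusters at strictly lower heights that share at least one of the $accessingShards$, so the recursion has depth at most the number of heights below $(i,j)$, which is in turn at most $c_1 \log D \log s$; combined with Lemma~\ref{lemma:time-required-for-multi-leader-scheduler-when-event-triggered} per activation and the fact that parallel non-intersecting clusters incur no additional serial cost, the total commit time is bounded by $c_1 \log D \log s \cdot \frac{\tau'}{4 c_1 \log D \log s} = \frac{\tau'}{4}$.
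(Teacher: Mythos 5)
Your proposal is correct and follows essentially the same route as the paper's proof: bound the scheduling cost of any single cluster activation by $\frac{\tau'}{4c_1\log D\log s}$ via Lemma~\ref{lemma:time-required-for-multi-leader-scheduler-when-event-triggered}, then multiply by the $c_1\log D\log s$ heights through which $scheduleControl$ can serialize to obtain $\frac{\tau'}{4}$. Your discussion of parallel non-intersecting clusters and the depth of the control-acquisition recursion is in fact somewhat more explicit than the paper's own (rather terse) justification of the same multiplication.
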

\begin{proof}
According to the algorithm design, each cluster leader requires schedule control to schedule their transactions. Suppose cluster $C$ is not at the bottom-most height
then it initially does not have the scheduling control 
and it must be acquired schedule control from its children.
The \textit{schedule\_control\_request} event is used to request scheduling control from $C$ to all children clusters. However,  children clusters could also have transactions and reach a threshold to schedule the transactions and access the same destination shards with $C$; thus, they may also request the schedule control and append transactions for scheduling.
%If there is only one cluster $C$, then their transaction schedule length is at most $\frac{\tau'}{4c_1 \log D \log s}$ (From Lemma~\ref{lemma:time-required-for-multi-leader-scheduler-when-event-triggered}).
%After scheduling and committing those transactions, they send the schedule control to the other cluster from where it receives the schedule control; otherwise, it sends the scheduler control back to its children. 

As the multi-layer clustering~\cite{gupta2006oblivious} of shards consists of $O(\log D)$ layers and $O(\log s)$ sublayers, each child cluster layer and sublayer may have the transaction to schedule and access the same destination shard $S_j$. So the schedule of cluster $C$ is restricted by the combined schedule of all the sublayers below.
%, which are at most $c_1 \log D \log s$.
%, where $c_1$ is some positive constant. 
From Lemma \ref{lemma:time-required-for-multi-leader-scheduler-when-event-triggered}, cluster $C$ and also every descendant cluster of $C$ induces scheduling length at most $\frac{\tau'}{4c_1 \log D \log s}$.
Since there are $c_1 \log D \log s$ sublayers, the total scheduling length from $C$ and the descendant clusters from all sublayers below it is at most $\frac{\tau'}{4c_1\log D \log s} \cdot c_1 \log D \log s \leq \frac{\tau'}{4}$. 
\end{proof}

We now restate Lemma~\ref{lemma:centralized-scheduler-transaction-generation} for Algorithm~\ref{alg:multiple-leaders-scheduler}. 
The proof follows the same structure as for Algorithm~\ref{alg:centralized-optimized-scheduler}, with one change: 
Every time when the scheduling event occurs, instead of using Lemma~\ref{lemma:time-required-for-scheduler-when-event-triggered}, 
we use Lemma~\ref{lemma:time-required-for-multi-leader-scheduler-when-event-triggered-for-all-cluster}. 
With this change, the rest of the proof remains the same, and we get the following Lemma~\ref{lemma:multi-leader-scheduler-txn-processing}.

% \vspace{-2mm}
\begin{lemma}
\label{lemma:multi-leader-scheduler-txn-processing}
In Algorithm \ref{alg:multiple-leaders-scheduler},
for transaction generation rate
$\rho \leq \rho''$
and $b\geq 1$,
in any interval $I_z$ of length $\tau'$, where $z \geq 1$, it holds that:
(i) transactions that are scheduled at the last event at $I_z$ will be committed by at most the middle of $I_{z+1}$, and 
(ii) all the transactions generated in the interval $I_z$ will
be committed or aborted by the end of $I_{z+1}$.
\end{lemma}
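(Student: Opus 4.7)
The plan is to prove both properties~(i) and~(ii) simultaneously by induction on the interval index $z$, mirroring exactly the structure of the proof of Lemma~\ref{lemma:centralized-scheduler-transaction-generation}, but substituting the multi-leader quantities $\tau'$, $\rho''$, $\zeta'$ in place of $\tau$, $\rho'$, $\zeta$, and invoking Lemma~\ref{lemma:max-injection-rate-for-multi-leader} and Lemma~\ref{lemma:time-required-for-multi-leader-scheduler-when-event-triggered-for-all-cluster} in place of Lemma~\ref{lemma:max-injection-rate} and Lemma~\ref{lemma:time-required-for-scheduler-when-event-triggered}. The induction base, $z=0$, is immediate since the artificial interval $I_0$ contains no transactions.

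For the inductive step, assuming both properties hold for every $j \leq z-1$, I would replay the case analysis from the single-leader proof. In Case~1 (no scheduling event in $I_z$), the contradiction arguments of Cases~1.a and~1.b go through verbatim: any new transaction in $I_z$ would, by the trigger condition and the inductive hypothesis on when earlier scheduled batches complete, force an event within $I_z$, contradicting the assumption. In Case~2 (an event $E_X$ occurs in $I_z$), I handle the three subcases 2.a, 2.b, 2.c based on whether the previous event $E_Y$ lies in $I_{z-1}$, in $I_z$, or before $I_{z-1}$. In each subcase, I bound the number of transactions generated in each relevant sub-interval by $\zeta'$ via Lemma~\ref{lemma:max-injection-rate-for-multi-leader}, and then bound the time needed to process any such batch by $\tau'/4$ via Lemma~\ref{lemma:time-required-for-multi-leader-scheduler-when-event-triggered-for-all-cluster}. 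Combining these as in the original proof yields that the transactions scheduled at $E_X$ finish by the middle of $I_{z+1}$ (property~(i)), and that the additional $\T_C$ transactions injected in $I_z$ after $E_X$ together with the $\T_D$ transactions injected in $I_{z+1}$ fit within the remaining budget of $\tau'/4 + \tau'/4$, summing to $\tau'/2 + \tau'/4 + \tau'/4 = \tau'$, which proves property~(ii).

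The main obstacle is justifying that Lemma~\ref{lemma:time-required-for-multi-leader-scheduler-when-event-triggered-for-all-cluster} really governs the aggregate work in the distributed setting. Unlike the single-leader scheduler, in $I_z$ several cluster leaders can concurrently request \emph{scheduleControl} through the hierarchical acquire/release mechanism, and the worst-case serialization among ancestor and descendant clusters on overlapping destination shards must be absorbed by the $\tau'/4$ bound. The key observation is that control requests are only issued along paths in the hierarchy whose clusters share destination shards with the accessing transactions, so the serialization depth is at most the $O(\log D \log s)$ overhead already baked into $\tau'$ and into Lemma~\ref{lemma:time-required-for-multi-leader-scheduler-when-event-triggered-for-all-cluster}. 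I would make this explicit at the start of the inductive step by noting that whenever a leader of cluster $C$ waits for children (or for a parent), the combined outstanding work across the contending sublayers is what Lemma~\ref{lemma:time-required-for-multi-leader-scheduler-when-event-triggered-for-all-cluster} accounts for; after that, the budget accounting proceeds precisely as in Lemma~\ref{lemma:centralized-scheduler-transaction-generation}, and the remark preceding the lemma (asserting that only the cited lemma substitution is needed) is vindicated.
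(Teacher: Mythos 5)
Your proposal matches the paper's own treatment: the paper proves this lemma by simply rerunning the inductive case analysis of Lemma~\ref{lemma:centralized-scheduler-transaction-generation} with $\tau'$, $\rho''$, $\zeta'$ in place of $\tau$, $\rho'$, $\zeta$, and with Lemma~\ref{lemma:time-required-for-multi-leader-scheduler-when-event-triggered-for-all-cluster} substituted for Lemma~\ref{lemma:time-required-for-scheduler-when-event-triggered}, exactly as you describe. Your extra paragraph justifying why the hierarchical \emph{scheduleControl} serialization is absorbed by the $O(\log D \log s)$ factor in $\tau'$ is a welcome addition that the paper leaves implicit, but it does not change the argument's structure.
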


\begin{theorem}[Multi-leader stability]
    \label{theorem:multi-leader-scheduler}
In the multi-leader scheduler, for injection rate 
$\rho \leq \frac{1}{16 c_1 \log D \log s} \cdot \max\left\{ \frac{1}{k}, \frac{1}{\sqrt{s}} \right\}$,
 with burstiness $b \geq 1$, the number of combined pending transactions throughout the system at any given time unit is at most $2(2b+ \rho \cdot 48 c_1 \mathfrak{D} \log D \log s)s$, and the transaction latency is upper bounded by at most  $32c_1b \log D \log s \cdot \min\{k, \sqrt{s}\} + 96 c_1 \mathfrak{D} \log D \log s$.
\end{theorem}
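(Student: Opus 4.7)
}
The plan is to mirror the argument used for Theorem \ref{theorem:single-leader}, but with the quantities $\tau$, $\rho'$, $\zeta$ replaced by their multi-leader counterparts $\tau'$, $\rho''$, $\zeta'$, and with the key per-interval processing guarantees supplied by Lemmas \ref{lemma:max-injection-rate-for-multi-leader}, \ref{lemma:time-required-for-multi-leader-scheduler-when-event-triggered-for-all-cluster}, and \ref{lemma:multi-leader-scheduler-txn-processing}. In short, I would split the claim into two parts: bounding the combined pending-queue size, and bounding the per-transaction latency.

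For the pending-queue bound, I would fix an arbitrary time $t$ and let $I_z$ be the interval containing $t$. By Lemma \ref{lemma:multi-leader-scheduler-txn-processing}(ii) applied to interval $I_{z-1}$, every transaction generated in $I_{z-1}$ (or earlier) has been committed or aborted by the end of $I_{z-1}$, so the only transactions present at time $t$ are those generated in $I_z$ up to time $t$, plus possibly the at-most-$\zeta'$ stale transactions that are still being processed from the last scheduling event of $I_{z-1}$. By Lemma \ref{lemma:max-injection-rate-for-multi-leader} the new transactions generated inside $I_z$ number at most $\zeta'$, so the total number of pending transactions at time $t$ is at most $2\zeta' = 2(2b + \rho\cdot 48 c_1 \mathfrak{D}\log D \log s)s$, matching the claimed bound.

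For the latency bound, I would again fix an arbitrary transaction $T$ generated at some time $t \in I_z$. By Lemma \ref{lemma:multi-leader-scheduler-txn-processing}(ii), $T$ is committed or aborted no later than the end of $I_{z+1}$. Hence its latency is at most the time between its generation and the end of $I_{z+1}$, which is bounded by $2\tau' = 32 c_1 b \log D \log s \cdot \min\{k,\lceil\sqrt{s}\rceil\} + 96 c_1 \mathfrak{D}\log D \log s$, matching the stated latency bound.

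The main obstacle is not the bookkeeping above, but rather ensuring that Lemma \ref{lemma:multi-leader-scheduler-txn-processing} really does apply in the hierarchical, multi-leader setting; this is already handled by the remark preceding its statement, which notes that the inductive proof from the single-leader case goes through after replacing Lemma \ref{lemma:time-required-for-scheduler-when-event-triggered} with Lemma \ref{lemma:time-required-for-multi-leader-scheduler-when-event-triggered-for-all-cluster}. The latter lemma already accounts for the multiplicative $c_1 \log D \log s$ overhead coming from the fact that scheduling control for a cluster $C$ may have to wait behind the combined scheduling work of all descendant clusters across all sublayers. Once Lemma \ref{lemma:multi-leader-scheduler-txn-processing} is in hand, the rest is, as shown, essentially a one-line application of the same ``one interval of stale work plus one interval of new injections'' argument used in Theorem \ref{theorem:single-leader}.
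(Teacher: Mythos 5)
Your proposal is correct and follows essentially the same route as the paper: the queue bound comes from at most $\zeta'$ stale plus at most $\zeta'$ newly injected transactions per interval (Lemma~\ref{lemma:max-injection-rate-for-multi-leader}), and the latency bound is $2\tau'$ via Lemma~\ref{lemma:multi-leader-scheduler-txn-processing}(ii). One small indexing slip: applying Lemma~\ref{lemma:multi-leader-scheduler-txn-processing}(ii) to $I_{z-1}$ yields commitment by the end of $I_z$ (not $I_{z-1}$), but since you separately account for the at-most-$\zeta'$ stale transactions carried over from $I_{z-1}$, the final bound of $2\zeta'$ is unaffected.
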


\begin{proof}
    To estimate the number of pending transactions during any time unit, consider a time unit within an interval $I_z$. 
    From the proof of Lemma~\ref{lemma:max-injection-rate-for-multi-leader}, the maximum number of stale transactions during any time of $I_z$ is $(2b+ \rho \cdot 48 c_1 \mathfrak{D} \log D \log s)s$. Within $I_z$, there can be at most $(2b+ \rho \cdot 48 c_1 \mathfrak{D} \log D \log s)s$ newly generated transactions. Therefore, the upper bound on pending transactions during any time unit is $2(2b+ \rho \cdot 48 c_1 \mathfrak{D} \log D \log s)s$.
    %, where $2bs$ is from old transactions and $2bs$ is from new transactions.
    For estimating transaction latency, we rely on the fact that a transaction generated in an interval will be processed by the end of the next interval, see Lemma \ref{lemma:multi-leader-scheduler-txn-processing}. 
    Consequently, the transaction latency is bounded by twice the duration of the maximum interval length $\tau'$:  
      % \vspace{-1mm}
    \[
    2 \tau' = 32c_1b \log D \log s \cdot \min\{k, \sqrt{s}\} + 96 c_1 \mathfrak{D} \log D \log s.
    \]%
    
% \qed
%
\end{proof}
  % \vspace{-2mm}
% }
% \vspace{-1mm}
%\vspace{-1mm}
\begin{theorem}
\label{theorem:no-deadlock}
   There is no deadlock in the Multi-Leader Scheduling Algorithm~\ref{alg:multiple-leaders-scheduler}.
\end{theorem}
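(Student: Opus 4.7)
The plan is to show that no deadlock can arise by exploiting the strict lexicographic order on cluster heights $(i,j)$ together with the fact that all $scheduleControl$ acquisition requests propagate in exactly one direction along this order. I would argue that the dependency graph of waiting clusters is always a DAG, so no circular wait (a necessary condition for deadlock in a resource system) can occur.

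First, I would formalize the \emph{wait-for} relation. Say that cluster $C$ at height $(i,j)$ waits on cluster $C''$ at height $(i'',j'')$ if $C$ has issued a \textit{schedule\_control\_request} to $C''$ (either on its own behalf or on behalf of some ancestor) and has not yet received the corresponding \textit{schedule\_control\_response}. Inspecting the event handlers in Algorithm~\ref{alg:multiple-leaders-scheduler}, a wait is only created when a cluster emits \textit{schedule\_control\_request} to its \emph{child} clusters, i.e., to clusters whose height is strictly smaller in the lexicographic order. Hence every edge of the wait-for relation points from a higher height to a strictly lower height. Since the set of heights $\{(i,j) : 0\le i<H_1,\ 0\le j<H_2\}$ is finite and totally ordered, the wait-for relation is a DAG, and in particular contains no cycle.

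Next I would establish progress at the sinks of this DAG. For clusters at the bottom-most height $(0,0)$, $scheduleControl$ is held by default, so such a cluster never needs to request control from below; when it receives a \textit{schedule\_control\_request} from a parent, it can respond as soon as its own currently scheduled batch commits. By Lemma~\ref{lemma:time-required-for-multi-leader-scheduler-when-event-triggered} together with the partial-synchrony bound $\mathfrak{D}$, the local batch finishes in bounded time, so a response is eventually emitted. Inductively on the height, if every descendant of a cluster $C$ eventually responds to any outstanding request, then the waiting clauses inside the two event handlers of $C$ (``Wait until all scheduled transactions $\mathcal{T}'$ are committed'' and the implicit wait for all children to return positive responses) all terminate in bounded time, so $C$ itself emits its response. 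Propagating this bound up through the at most $H_1 \cdot H_2 = O(\log D \log s)$ heights yields that every request is eventually satisfied.

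The step I expect to be the main obstacle is handling the \emph{accessingShards} field correctly, because two sibling subtrees may simultaneously request control and could in principle interfere through a common descendant. I would address this by noting that a request sent from $C$ is only forwarded to those children that actually contain shards in $accessingShards$, so two disjoint parent requests cannot generate a common waiting chain; and even when their $accessingShards$ sets overlap, the lexicographic tie-breaking on heights, combined with the fact that a cluster holding control finishes its bounded-length batch before releasing it (Lemma~\ref{lemma:time-required-for-multi-leader-scheduler-when-event-triggered-for-all-cluster}), serializes the two requests rather than deadlocking them. Combining the acyclicity of the wait-for relation with bounded local progress at the leaves and the inductive propagation upward yields that no configuration of pending requests can remain unresolved forever, which is precisely the no-deadlock property.
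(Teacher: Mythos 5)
Your proposal is correct and its core argument is essentially the paper's own: the paper also rules out deadlock by observing that a cluster only ever waits on children at strictly lower lexicographic height, so a circular wait would force $height(C_1) > height(C_r) > height(C_1)$, a contradiction --- which is exactly your ``every wait-for edge decreases height, hence the wait-for relation is a DAG'' formulation. The additional material you supply (bounded progress at the bottom-most clusters, the upward induction, and the discussion of $accessingShards$) goes beyond what the paper proves under this theorem; the paper defers those liveness considerations to Corollary~\ref{corollary:safety-and-liveness} and the latency bounds of Theorem~\ref{theorem:multi-leader-scheduler}.
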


\begin{proof}
For the sake of contradiction, suppose that there exists a deadlock in Algorithm~\ref{alg:multiple-leaders-scheduler}.
   % In the proposed algorithm, all the sublayers and layers of the cluster are arranged in lexicographic order to maintain the hierarchy of the cluster structure. Moreover, each cluster at height $C$ = $(i,j)$ only interact with its immediate parent cluster namely $(i',j')$ and immediate child cluster $(i'',j'')$. Thus, when the cluster $C$ at level $(i,j)$ received the schedule control request from its parent cluster at level $(i',j')$ it 
This means that a set of clusters are involved in a circular wait. More specifically, suppose there is 
 existence of a cycle of clusters $C_1, C_2, \ldots, C_r$
 such that each cluster $C_i$ 
  is waiting for $scheduleControl$ from the next cluster $C_{i+1}$, and similarly the last cluster $C_r$ waiting for s$cheduleControl$ from first cluster $C_1$, and this forms a circular dependency.

  In Algorithm~\ref{alg:multiple-leaders-scheduler}, clusters are ordered lexicographically according to their height, and the acquisition of $scheduleControl$ by any cluster $C$ is coordinated with its {\em parent} and {\em child} clusters based on their heights. If $height(C) = (i,j)$, then the immediate higher height is $(i',j') > (i,j)$ and the immediate lower height is $(i'',j'') < (i,j)$. This hierarchical structure ensures that there is a strict ordering in which $scheduleControl$ can be acquired and released.
  
  According to the design of Algorithm~\ref{alg:multiple-leaders-scheduler}, a cluster $C$ may only wait for $scheduleControl$ from its {\em child} clusters at the immediate lower height. Thus, if $C_i$ is waiting for $C_{i+1}$, it must be that: $height(C_{i+1})< height(C_i)$ i.e. $C_{i+1}$ is at a strictly lower height than $C_i$. 
  
  From the cycle assumption we consider, we could have the sequence of the following inequalities:
  $height(C_2)< height(C_1)$, $height(C_3)< height(C_2)$, $\ldots$, $height(C_1)< height(C_r)$.

  By transitivity of the height ordering, this implies:
   $height(C_1)> height(C_r)> height(C_1)$,
   which is a contradiction, because a strict inequality cannot hold between an element and itself. Therefore, the assumption we made that a deadlock occurs leads to a logical contradiction. Thus, no deadlock can occur in Algorithm~\ref{alg:multiple-leaders-scheduler}. \end{proof}

   \begin{corollary}
   \label{corollary:safety-and-liveness}
       Our proposed scheduling Algorithm~\ref{alg:multiple-leaders-scheduler} provides safety and liveness.
   \end{corollary}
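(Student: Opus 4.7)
The plan is to establish safety and liveness as two separate claims, each leveraging the results already proved in the preceding theorems.

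For safety, I would argue along two axes, intra-cluster and inter-cluster. Intra-cluster safety follows from Theorem~\ref{theorem:alg1-safety-and-liveness}: whenever a cluster $C$ holds $scheduleControl$ for all the destination shards its batch accesses, it executes Algorithm~\ref{alg:centralized-optimized-scheduler} locally, which builds the conflict graph $G_{\mathcal{T}}$, colors it greedily, and commits color classes in order, guaranteeing that conflicting transactions inside $C$ are serialized consistently at every destination shard. For inter-cluster safety, I would prove the following invariant by induction on the sequence of \textit{schedule\_control\_request}/\textit{schedule\_control\_response} events: \emph{for every shard $S_j$ and at every point in time, at most one cluster holds $scheduleControl$ for $S_j$.} The base case is immediate since, by design, $scheduleControl$ initially resides at the bottom-most height clusters (height $(0,0)$). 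The inductive step follows from the algorithm's release discipline: a cluster never grants $scheduleControl$ for a shard it does not currently own, and when it hands control upward to a parent it immediately ceases to use it until the parent returns it via \textit{schedule\_control\_release}. Given this invariant, two clusters whose pending batches conflict on some shard $S_j$ cannot simultaneously be in the committing phase, so their conflicting transactions are serialized in the global order induced by the acquisition of $scheduleControl$ on $S_j$. Combined with the intra-cluster argument, every pair of conflicting transactions is committed in a consistent serial order across all involved destination shards.

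For liveness, I would combine Theorem~\ref{theorem:no-deadlock} and Theorem~\ref{theorem:multi-leader-scheduler}. Theorem~\ref{theorem:no-deadlock} rules out circular waits on $scheduleControl$, so every \textit{schedule\_control\_request} issued by a leader eventually returns with a positive \textit{schedule\_control\_response} after a bounded chain of forwarding through lexicographically lower heights. Theorem~\ref{theorem:multi-leader-scheduler} then gives a concrete upper bound of $32 c_1 b \log D \log s \cdot \min\{k,\sqrt{s}\} + 96 c_1 \mathfrak{D} \log D \log s$ on the latency of any generated transaction. Hence every transaction is committed or aborted within a bounded number of time units, which is exactly liveness.

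The main obstacle I anticipate is the inter-cluster safety invariant. Because $scheduleControl$ is requested and released on a per-shard basis and the multi-layer sparse cover from Section~\ref{shard-clustering} allows a shard to belong to $O(\log s)$ clusters per layer, one must be careful that the request routing through the hierarchy (only forwarding to children that contain a shard in $accessingShards$) never double-grants a shard's control and never loses it in transit. I would discharge this by a careful state-machine argument on the set of pending requests for each shard, using the lexicographic ordering of heights already invoked in the proof of Theorem~\ref{theorem:no-deadlock} to show that control acquisitions and releases on a given shard form a well-nested sequence. Once this invariant is in hand, the corollary is a direct consequence of the safety discussion above together with Theorems~\ref{theorem:no-deadlock} and~\ref{theorem:multi-leader-scheduler}.
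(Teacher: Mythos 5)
Your proposal is correct and, for liveness, follows exactly the paper's route: invoke Theorem~\ref{theorem:no-deadlock} to rule out circular waits on $scheduleControl$ and Theorem~\ref{theorem:multi-leader-scheduler} to get the explicit latency bound, hence bounded-time commit or abort for every transaction. For safety, however, you go noticeably further than the paper. The paper's own proof argues only the intra-cluster case: the leader builds the conflict graph, colors it greedily, and commits color classes in order, so conflicting transactions within one leader's batch are serialized. It does not explicitly state or prove the inter-cluster mutual-exclusion invariant you formulate (that at most one cluster holds $scheduleControl$ for any given shard at any time); that property is left implicit in the algorithm's description of how control is requested, granted, and released along the cluster hierarchy. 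Your decomposition into intra-cluster safety (via Theorem~\ref{theorem:alg1-safety-and-liveness}) plus an inductively proved per-shard ownership invariant is a more rigorous treatment of exactly the point the paper glosses over, and your identification of the request-routing through the $O(\log s)$ overlapping clusters per layer as the delicate spot is well placed. The trade-off is that the paper's proof is shorter by appealing to the design discipline of $scheduleControl$ rather than proving it, whereas your version would actually discharge that obligation; if you carry out the state-machine argument you sketch, your proof is strictly stronger than (and consistent with) the one in the paper.
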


    \begin{proof}
    We achieved safety by using a greedy vertex coloring algorithm within the leader shard. 
    In our protocol, each shard sends its transactions to the leader shard, which constructs a transaction conflict graph. 
    Then the leader assigns different colors to conflicting transactions and the same color to non-conflicting transactions using the greedy vertex coloring algorithm. After that, transactions are processed according to their assigned colors, which ensures that conflicting transactions commit at different time units.

    Liveness is guaranteed by ensuring that every generated transaction eventually commits. 
    From Theorem~\ref{theorem:no-deadlock}, we have shown that there is no deadlock in Algorithm~\ref{alg:multiple-leaders-scheduler}. 
    Similarly, there is no deadlock in Algorithm~\ref{alg:centralized-optimized-scheduler}, as only a single leader controls the shard without interference from other leaders. 
    In fact, Algorithm~\ref{alg:centralized-optimized-scheduler} can be viewed as a special case of Algorithm~\ref{alg:multiple-leaders-scheduler} with only one cluster.

    Furthermore, from Theorems~\ref{theorem:single-leader} and~\ref{theorem:multi-leader-scheduler}, we bound the transaction latency for both single-leader and multi-leader schedulers.
    This ensures that every generated transaction is either committed or aborted within a bounded number of time units.
   \end{proof}

% \vspace{-3mm}
\section{Conclusions}
\label{sec:conclusion}
% \vspace{-3mm}
In conclusion, we designed a stable transaction scheduling algorithm for the blockchain sharding system. Our algorithm significantly improves the stable transaction injection rate compared to the best-known result in~\cite{adhikari2024spaastable}. Moreover, our proposed scheduler works in a partially-synchronous model, which is more practical than the synchronous model considered in~\cite{adhikari2024spaastable}.
To our knowledge, the bound we establish here provides the most efficient stable transaction rate for blockchain sharding systems. Blockchain designers can use our results to build systems that are resilient to DoS attacks.
   
   Note that our proposed schedulers eliminate the back-and-forth communication between the leader and destination shards used in~\cite{adhikari2024spaastable} for transaction confirmation and commitment. 
   This is achieved by pre-fetching account states and performing local pre-commitment at the leader shard. Since only valid transactions are pre-committed, no confirmation step from destination shards is required in our schedulers. 
   This change is one of the reasons we obtain near-optimal injection rate. 
In future work, we plan to simulate our proposed scheduling algorithm to validate our theoretical results. 
%We also want to improve the communication complexity between shards. 
Further research directions could be dynamic shard clustering for a multi-leader scheduler, as our current model assumes that clusters are precomputed.

\begin{acks}
This paper is supported by NSF grant CNS-2131538.
\end{acks}

%% the bibliography file.
\bibliographystyle{ACM-Reference-Format}
\balance
\bibliography{references}

%%% -*-BibTeX-*-
%%% Do NOT edit. File created by BibTeX with style
%%% ACM-Reference-Format-Journals [18-Jan-2012].

\begin{thebibliography}{28}

%%% ====================================================================
%%% NOTE TO THE USER: you can override these defaults by providing
%%% customized versions of any of these macros before the \bibliography
%%% command.  Each of them MUST provide its own final punctuation,
%%% except for \shownote{}, \showDOI{}, and \showURL{}.  The latter two
%%% do not use final punctuation, in order to avoid confusing it with
%%% the Web address.
%%%
%%% To suppress output of a particular field, define its macro to expand
%%% to an empty string, or better, \unskip, like this:
%%%
%%% \newcommand{\showDOI}[1]{\unskip}   % LaTeX syntax
%%%
%%% \def \showDOI #1{\unskip}           % plain TeX syntax
%%%
%%% ====================================================================

\ifx \showCODEN    \undefined \def \showCODEN     #1{\unskip}     \fi
\ifx \showDOI      \undefined \def \showDOI       #1{#1}\fi
\ifx \showISBNx    \undefined \def \showISBNx     #1{\unskip}     \fi
\ifx \showISBNxiii \undefined \def \showISBNxiii  #1{\unskip}     \fi
\ifx \showISSN     \undefined \def \showISSN      #1{\unskip}     \fi
\ifx \showLCCN     \undefined \def \showLCCN      #1{\unskip}     \fi
\ifx \shownote     \undefined \def \shownote      #1{#1}          \fi
\ifx \showarticletitle \undefined \def \showarticletitle #1{#1}   \fi
\ifx \showURL      \undefined \def \showURL       {\relax}        \fi
% The following commands are used for tagged output and should be
% invisible to TeX
\providecommand\bibfield[2]{#2}
\providecommand\bibinfo[2]{#2}
\providecommand\natexlab[1]{#1}
\providecommand\showeprint[2][]{arXiv:#2}

\bibitem[Adhikari and Busch(2023)]%
        {adhikari2023lockless}
\bibfield{author}{\bibinfo{person}{Ramesh Adhikari} {and} \bibinfo{person}{Costas Busch}.} \bibinfo{year}{2023}\natexlab{}.
\newblock \showarticletitle{Lockless Blockchain Sharding with Multiversion Control}. \bibinfo{publisher}{Springer-Verlag}, \bibinfo{address}{Berlin, Heidelberg}, \bibinfo{pages}{112–131}.
\newblock
\showISBNx{978-3-031-32732-2}


\bibitem[Adhikari et~al\mbox{.}(2024a)]%
        {adhikari2024spaastable}
\bibfield{author}{\bibinfo{person}{Ramesh Adhikari}, \bibinfo{person}{Costas Busch}, {and} \bibinfo{person}{Dariusz~R Kowalski}.} \bibinfo{year}{2024}\natexlab{a}.
\newblock \showarticletitle{Stable blockchain sharding under adversarial transaction generation}. In \bibinfo{booktitle}{\emph{Proceedings of the 36th ACM Symposium on Parallelism in Algorithms and Architectures}}. \bibinfo{pages}{451--461}.
\newblock


\bibitem[Adhikari et~al\mbox{.}(2024b)]%
        {adhikari2024fast}
\bibfield{author}{\bibinfo{person}{Ramesh Adhikari}, \bibinfo{person}{Costas Busch}, {and} \bibinfo{person}{Miroslav Popovic}.} \bibinfo{year}{2024}\natexlab{b}.
\newblock \showarticletitle{Fast Transaction Scheduling in Blockchain Sharding}.
\newblock \bibinfo{journal}{\emph{arXiv preprint arXiv:2405.15015}} (\bibinfo{year}{2024}).
\newblock


\bibitem[Adhikari et~al\mbox{.}(2025)]%
        {adhikari2025efficiency}
\bibfield{author}{\bibinfo{person}{Ramesh Adhikari}, \bibinfo{person}{Costas Busch}, {and} \bibinfo{person}{Miroslav Popovic}.} \bibinfo{year}{2025}\natexlab{}.
\newblock \showarticletitle{On the Efficiency of Dynamic Transaction Scheduling in Blockchain Sharding}.
\newblock \bibinfo{journal}{\emph{arXiv preprint arXiv:2508.07472}} (\bibinfo{year}{2025}).
\newblock
\newblock
\shownote{Proceedings of the 39th International Symposium on Distributed Computing}.


\bibitem[Alvarez et~al\mbox{.}(2004)]%
        {1316091}
\bibfield{author}{\bibinfo{person}{C. Alvarez}, \bibinfo{person}{M. Blesa}, {and} \bibinfo{person}{M. Serna}.} \bibinfo{year}{2004}\natexlab{}.
\newblock \showarticletitle{The impact of failure management on the stability of communication networks}. In \bibinfo{booktitle}{\emph{Proceedings. Tenth International Conference on Parallel and Distributed Systems, 2004. ICPADS 2004.}} \bibinfo{pages}{153--160}.
\newblock
\urldef\tempurl%
\url{https://doi.org/10.1109/ICPADS.2004.1316091}
\showDOI{\tempurl}


\bibitem[Andrews et~al\mbox{.}(2001)]%
        {Andrews2001}
\bibfield{author}{\bibinfo{person}{Matthew Andrews}, \bibinfo{person}{Baruch Awerbuch}, \bibinfo{person}{Antonio Fernandez}, \bibinfo{person}{Tom Leighton}, \bibinfo{person}{Zhiyong Liu}, {and} \bibinfo{person}{Jon Kleinberg}.} \bibinfo{year}{2001}\natexlab{}.
\newblock \showarticletitle{Universal-stability results and performance bounds for greedy contention-resolution protocols}.
\newblock \bibinfo{journal}{\emph{J. ACM}} \bibinfo{volume}{48}, \bibinfo{number}{1} (\bibinfo{date}{Jan.} \bibinfo{year}{2001}), \bibinfo{pages}{39--69}.
\newblock
\showISSN{0004-5411}


\bibitem[Azzi et~al\mbox{.}(2019)]%
        {azzi2019power}
\bibfield{author}{\bibinfo{person}{Rita Azzi}, \bibinfo{person}{Rima~Kilany Chamoun}, {and} \bibinfo{person}{Maria Sokhn}.} \bibinfo{year}{2019}\natexlab{}.
\newblock \showarticletitle{The power of a blockchain-based supply chain}.
\newblock \bibinfo{journal}{\emph{Computers \& industrial engineering}}  \bibinfo{volume}{135} (\bibinfo{year}{2019}), \bibinfo{pages}{582--592}.
\newblock


\bibitem[Bender et~al\mbox{.}(2005)]%
        {bender2005adversarial}
\bibfield{author}{\bibinfo{person}{Michael~A Bender}, \bibinfo{person}{Martin Farach-Colton}, \bibinfo{person}{Simai He}, \bibinfo{person}{Bradley~C Kuszmaul}, {and} \bibinfo{person}{Charles~E Leiserson}.} \bibinfo{year}{2005}\natexlab{}.
\newblock \showarticletitle{Adversarial contention resolution for simple channels}. In \bibinfo{booktitle}{\emph{Proc. of the 17th ACM SPAA'05}}. \bibinfo{pages}{325--332}.
\newblock


\bibitem[Borodin et~al\mbox{.}(2001)]%
        {borodin2001adversarial}
\bibfield{author}{\bibinfo{person}{Allan Borodin}, \bibinfo{person}{Jon Kleinberg}, \bibinfo{person}{Prabhakar Raghavan}, \bibinfo{person}{Madhu Sudan}, {and} \bibinfo{person}{David~P Williamson}.} \bibinfo{year}{2001}\natexlab{}.
\newblock \showarticletitle{Adversarial queuing theory}.
\newblock \bibinfo{journal}{\emph{Journal of the ACM (JACM)}} \bibinfo{volume}{48}, \bibinfo{number}{1} (\bibinfo{year}{2001}), \bibinfo{pages}{13--38}.
\newblock


\bibitem[Busch et~al\mbox{.}(2023)]%
        {busch2023stable}
\bibfield{author}{\bibinfo{person}{Costas Busch}, \bibinfo{person}{Bogdan~S Chlebus}, \bibinfo{person}{Dariusz~R Kowalski}, {and} \bibinfo{person}{Pavan Poudel}.} \bibinfo{year}{2023}\natexlab{}.
\newblock \showarticletitle{Stable Scheduling in Transactional Memory}. In \bibinfo{booktitle}{\emph{Algorithms and Complexity: 13th International Conference, CIAC 2023, Larnaca, Cyprus, June 13--16, 2023, Proceedings}}. Springer, \bibinfo{pages}{172--186}.
\newblock


\bibitem[Busch et~al\mbox{.}(2022)]%
        {busch2022dynamic}
\bibfield{author}{\bibinfo{person}{Costas Busch}, \bibinfo{person}{Maurice Herlihy}, \bibinfo{person}{Miroslav Popovic}, {and} \bibinfo{person}{Gokarna Sharma}.} \bibinfo{year}{2022}\natexlab{}.
\newblock \showarticletitle{Dynamic scheduling in distributed transactional memory}.
\newblock \bibinfo{journal}{\emph{Distributed Computing}} \bibinfo{volume}{35}, \bibinfo{number}{1} (\bibinfo{year}{2022}), \bibinfo{pages}{19--36}.
\newblock


\bibitem[Castro et~al\mbox{.}(1999)]%
        {PBFT}
\bibfield{author}{\bibinfo{person}{Miguel Castro}, \bibinfo{person}{Barbara Liskov}, {et~al\mbox{.}}} \bibinfo{year}{1999}\natexlab{}.
\newblock \showarticletitle{Practical byzantine fault tolerance}. In \bibinfo{booktitle}{\emph{OsDI}}, Vol.~\bibinfo{volume}{99}. \bibinfo{pages}{173--186}.
\newblock


\bibitem[Chlebus et~al\mbox{.}(2009)]%
        {chlebus2009maximum}
\bibfield{author}{\bibinfo{person}{Bogdan~S Chlebus}, \bibinfo{person}{Dariusz~R Kowalski}, {and} \bibinfo{person}{Mariusz~A Rokicki}.} \bibinfo{year}{2009}\natexlab{}.
\newblock \showarticletitle{Maximum throughput of multiple access channels in adversarial environments}.
\newblock \bibinfo{journal}{\emph{Distributed Comp.}}  \bibinfo{volume}{22} (\bibinfo{year}{2009}), \bibinfo{pages}{93--116}.
\newblock


\bibitem[Chlebus et~al\mbox{.}(2012)]%
        {chlebus2012adversarial}
\bibfield{author}{\bibinfo{person}{Bogdan~S Chlebus}, \bibinfo{person}{Dariusz~R Kowalski}, {and} \bibinfo{person}{Mariusz~A Rokicki}.} \bibinfo{year}{2012}\natexlab{}.
\newblock \showarticletitle{Adversarial queuing on the multiple access channel}.
\newblock \bibinfo{journal}{\emph{ACM Transactions on Algorithms (TALG)}} \bibinfo{volume}{8}, \bibinfo{number}{1} (\bibinfo{year}{2012}), \bibinfo{pages}{1--31}.
\newblock


\bibitem[Dwork et~al\mbox{.}(1988)]%
        {dwork1988consensus}
\bibfield{author}{\bibinfo{person}{Cynthia Dwork}, \bibinfo{person}{Nancy Lynch}, {and} \bibinfo{person}{Larry Stockmeyer}.} \bibinfo{year}{1988}\natexlab{}.
\newblock \showarticletitle{Consensus in the presence of partial synchrony}.
\newblock \bibinfo{journal}{\emph{Journal of the ACM (JACM)}} \bibinfo{volume}{35}, \bibinfo{number}{2} (\bibinfo{year}{1988}), \bibinfo{pages}{288--323}.
\newblock


\bibitem[Gupta et~al\mbox{.}(2006)]%
        {gupta2006oblivious}
\bibfield{author}{\bibinfo{person}{Anupam Gupta}, \bibinfo{person}{Mohammad~T Hajiaghayi}, {and} \bibinfo{person}{Harald R{\"a}cke}.} \bibinfo{year}{2006}\natexlab{}.
\newblock \showarticletitle{Oblivious network design}. In \bibinfo{booktitle}{\emph{Proc. of the 17th ACM-SIAM Symp. on Discrete Algorithms (SODA'06)}}. \bibinfo{pages}{970--979}.
\newblock


\bibitem[Hellings and Sadoghi(2021)]%
        {Byshard}
\bibfield{author}{\bibinfo{person}{Jelle Hellings} {and} \bibinfo{person}{Mohammad Sadoghi}.} \bibinfo{year}{2021}\natexlab{}.
\newblock \showarticletitle{Byshard: Sharding in a byzantine environment}.
\newblock \bibinfo{journal}{\emph{Proceedings of the VLDB Endowment}} \bibinfo{volume}{14}, \bibinfo{number}{11} (\bibinfo{year}{2021}), \bibinfo{pages}{2230--2243}.
\newblock


\bibitem[Hellings and Sadoghi(2022)]%
        {clustersending}
\bibfield{author}{\bibinfo{person}{Jelle Hellings} {and} \bibinfo{person}{Mohammad Sadoghi}.} \bibinfo{year}{2022}\natexlab{}.
\newblock \showarticletitle{The Fault-Tolerant Cluster-Sending Problem}. In \bibinfo{booktitle}{\emph{Foundations of Information and Knowledge Systems}}, \bibfield{editor}{\bibinfo{person}{Ivan Varzinczak}} (Ed.). \bibinfo{publisher}{Springer International Publishing}, \bibinfo{address}{Cham}, \bibinfo{pages}{168--186}.
\newblock
\showISBNx{978-3-031-11321-5}


\bibitem[Luu et~al\mbox{.}(2016)]%
        {Elastico}
\bibfield{author}{\bibinfo{person}{Loi Luu}, \bibinfo{person}{Viswesh Narayanan}, \bibinfo{person}{Chaodong Zheng}, \bibinfo{person}{Kunal Baweja}, \bibinfo{person}{Seth Gilbert}, {and} \bibinfo{person}{Prateek Saxena}.} \bibinfo{year}{2016}\natexlab{}.
\newblock \showarticletitle{A secure sharding protocol for open blockchains}. In \bibinfo{booktitle}{\emph{Proceedings of the 2016 ACM SIGSAC conference on computer and communications security}}. \bibinfo{pages}{17--30}.
\newblock


\bibitem[McGhin et~al\mbox{.}(2019)]%
        {mcghin2019blockchain}
\bibfield{author}{\bibinfo{person}{Thomas McGhin}, \bibinfo{person}{Kim-Kwang~Raymond Choo}, \bibinfo{person}{Charles~Zhechao Liu}, {and} \bibinfo{person}{Debiao He}.} \bibinfo{year}{2019}\natexlab{}.
\newblock \showarticletitle{Blockchain in healthcare applications: Research challenges and opportunities}.
\newblock \bibinfo{journal}{\emph{Journal of network and computer applications}}  \bibinfo{volume}{135} (\bibinfo{year}{2019}), \bibinfo{pages}{62--75}.
\newblock


\bibitem[Nguyen and Thai(2022)]%
        {nguyen2022denial}
\bibfield{author}{\bibinfo{person}{Truc Nguyen} {and} \bibinfo{person}{My~T Thai}.} \bibinfo{year}{2022}\natexlab{}.
\newblock \showarticletitle{Denial-of-service vulnerability of hash-based transaction sharding: attack and countermeasure}.
\newblock \bibinfo{journal}{\emph{IEEE Trans. Comp.}} \bibinfo{volume}{72}, \bibinfo{number}{3} (\bibinfo{year}{2022}), \bibinfo{pages}{641--652}.
\newblock


\bibitem[Raikwar and Gligoroski(2021)]%
        {raikwar2021attacks}
\bibfield{author}{\bibinfo{person}{Mayank Raikwar} {and} \bibinfo{person}{Danilo Gligoroski}.} \bibinfo{year}{2021}\natexlab{}.
\newblock \showarticletitle{DoS attacks on blockchain ecosystem}. In \bibinfo{booktitle}{\emph{European Conference on Parallel Processing}}. Springer, \bibinfo{pages}{230--242}.
\newblock


\bibitem[Rosén(2002)]%
        {ROSEN2002237}
\bibfield{author}{\bibinfo{person}{Adi Rosén}.} \bibinfo{year}{2002}\natexlab{}.
\newblock \showarticletitle{A note on models for non-probabilistic analysis of packet switching networks}.
\newblock \bibinfo{journal}{\emph{Inform. Process. Lett.}} \bibinfo{volume}{84}, \bibinfo{number}{5} (\bibinfo{year}{2002}), \bibinfo{pages}{237--240}.
\newblock
\showISSN{0020-0190}


\bibitem[Sankar et~al\mbox{.}(2017)]%
        {survey-of-onsensus}
\bibfield{author}{\bibinfo{person}{Lakshmi~Siva Sankar}, \bibinfo{person}{M Sindhu}, {and} \bibinfo{person}{M Sethumadhavan}.} \bibinfo{year}{2017}\natexlab{}.
\newblock \showarticletitle{Survey of consensus protocols on blockchain applications}. In \bibinfo{booktitle}{\emph{Proc. of the 4th ICACCS'17}}. \bibinfo{pages}{1--5}.
\newblock


\bibitem[Secure(2018)]%
        {Zilliqa}
\bibfield{author}{\bibinfo{person}{A Secure}.} \bibinfo{year}{2018}\natexlab{}.
\newblock \showarticletitle{The zilliqa project: A secure, scalable blockchain platform}.
\newblock  (\bibinfo{year}{2018}).
\newblock


\bibitem[Sharma and Busch(2014)]%
        {sharma2014distributed}
\bibfield{author}{\bibinfo{person}{Gokarna Sharma} {and} \bibinfo{person}{Costas Busch}.} \bibinfo{year}{2014}\natexlab{}.
\newblock \showarticletitle{Distributed transactional memory for general networks}.
\newblock \bibinfo{journal}{\emph{Distributed computing}} \bibinfo{volume}{27}, \bibinfo{number}{5} (\bibinfo{year}{2014}), \bibinfo{pages}{329--362}.
\newblock


\bibitem[Zamani et~al\mbox{.}(2018)]%
        {Rapidchain}
\bibfield{author}{\bibinfo{person}{Mahdi Zamani}, \bibinfo{person}{Mahnush Movahedi}, {and} \bibinfo{person}{Mariana Raykova}.} \bibinfo{year}{2018}\natexlab{}.
\newblock \showarticletitle{Rapidchain: Scaling blockchain via full sharding}. In \bibinfo{booktitle}{\emph{Proceedings of the 2018 ACM SIGSAC}}. \bibinfo{pages}{931--948}.
\newblock


\bibitem[Àlvarez et~al\mbox{.}(2005)]%
        {Alvarez2005}
\bibfield{author}{\bibinfo{person}{C. Àlvarez}, \bibinfo{person}{M. Blesa}, \bibinfo{person}{J. Díaz}, \bibinfo{person}{M. Serna}, {and} \bibinfo{person}{A. Fernández}.} \bibinfo{year}{2005}\natexlab{}.
\newblock \showarticletitle{Adversarial models for priority-based networks}.
\newblock \bibinfo{journal}{\emph{Networks}} \bibinfo{volume}{45}, \bibinfo{number}{1} (\bibinfo{year}{2005}), \bibinfo{pages}{23--35}.
\newblock


\end{thebibliography}

%%
%% If your work has an appendix, this is the place to put it.
% \appendix
% \input{appendix}
\end{document}